\documentclass[a4paper]{article}
\usepackage{amsmath}\allowdisplaybreaks[4]
\usepackage{amsthm}%for proof
\usepackage[ruled,vlined]{algorithm2e}
\usepackage{graphicx}
\usepackage{amsfonts}
\usepackage{amssymb}
\usepackage[sl]{caption}
\usepackage[top=3cm,bottom=3cm,left=2cm,right=2cm]{geometry}
%\def\ChineseScale{1200}
%\textwidth 145mm \textheight 218mm

%\numberwithin{equation}{section} \makeatletter
\newcommand\figcaption{\def\@captype{figure}\caption}%
\newcommand\tabcaption{\def\@captype{table}\caption}%
\makeatother

\usepackage{fancyhdr}
\usepackage[scriptsize]{subfigure}

\usepackage{color}
\usepackage{listings}
\newtheorem{theorem}{\hskip\parindent Theorem}
\newtheorem{lemma}{\hskip\parindent Lemma}
\newtheorem{remark}{\hskip\parindent Remark}
\newtheorem{definition}{\hskip\parindent Definition}
\newtheorem{corollary}{\hskip\parindent Corollary}

\newtheorem{model}{\hskip\parindent Model}

\newcounter{note}[section]

\newcommand{\li}[1]{\refstepcounter{note}$\ll${\sf Li's
Comment~\thenote:} {\sf \textcolor{blue}{#1}}$\gg$}

\newcommand{\ignore}[1]{}

\newcommand{\E}{\mathbb{E}}
\newcommand{\ceil}[1]{\ensuremath{\lceil #1 \rceil}}
\newcommand{\floor}[1]{\ensuremath{\lfloor #1 \rfloor}}

\newcommand{\NLT}{\textnormal{\textbf{NLT}}}
\newcommand{\RW}{\textnormal{\textbf{RW}}}
\newcommand{\PE}{\textnormal{\textbf{PE}}}

\begin{document}

\title{Influence Maximization under The Non-progressive Linear Threshold Model
}

\author{
		T-H. Hubert Chan  \\
        Department of Computer Science \\
		the University of Hong Kong \\ 
		Pokfulam Road, Hong Kong\\
        {\small hubert@cs.hku.hk}           %  \\
		\and
        Li Ning$^*$ \\ 
        Shenzhen Institutes of Advanced Technology\\
		Chinese Academy of Science \\
		Xueyuan Road 1068, Shenzhen, China \\
		{\small li.ning@siat.ac.cn}
}

\maketitle

\begin{abstract}
In the problem of influence maximization in information networks, the objective
is to choose a set of initially active nodes subject to some budget constraints
such that the expected number of active nodes over time is maximized. 

The linear threshold model has been introduced to study the opinion cascading behavior,
for instance, the spread of products and innovations.
In the existing studies, the study of the linear threshold model mainly focus on the progressive case, 
in which once a user became active, it is forced to be active forever. 
In this paper, we consider the non-progressive case in which active nodes might become
inactive in subsequent time steps. This setting makes it possible to model the users' dynamic behavior, 
and consequently fit better to the situation of continuous consumption in the daily life. 
Previous works on the non-progressive case assumed that the 
thresholds indicating the susceptibilities of the individuals change
randomly and independently at every time step. We argue that an individual's 
susceptibility should be consistent, and hence it is more realistic
to consider the case in which an individual's susceptibility is chosen
initially at random, but then remains the same throughout the process.
This setting causes more completeness for the analysis, as for any two nodes
that share a common ancestor, their status are no longer independent.

In this paper, we we extends the classic 
\emph{linear threshold model} \cite{Kempe2003} to capture the non-progressive behavior.
The information maximization problem under our model is proved to be NP-Hard, 
even for the case when the underlying network has no directed cycles. 
The first result of this paper is negative. 
In general, the objective function of the extended linear threshold model is no longer submodular,
and hence the \emph{hill climbing} approach that is commonly used in 
the existing studies is not applicable.
Next, as the main result of this paper, we prove that if the underlying information network
is directed acyclic, the objective function is submodular (and monotone). 
Therefore, in directed acyclic networks with a specified budget we can achieve $\frac{1}{2}$-approximation
on maximizing the number of active nodes over a certain period of time by a deterministic algorithm, and
achieve the $(1-\frac{1}{e})$-approximation by a randomized algorithm. 

\ignore{
Recall that in the progressive case of the linear threshold model, 
the influence is measured by the final stable number of active nodes. 
However this measurement is not proper anymore for the non-progressive case,
as the nodes may switch between the active status and the inactive status as the time goes, and 
it is possible that the number of active nodes never stabilize. 

and assume that 
initially an active node can be either \emph{transient} in which case 
the node can become inactive later, or \emph{permanent} in which case
the node remains active throughout. When all initially active nodes are
permanent, the instance only captures the progressive behaviors. 

Since we assume that the thresholds are fixed once they are initially chosen, 
the status of nodes at one time round are associated and hence the proof of submodularity
of the objective function is not obvious even for the directed acyclic information networks. 
The tricky part of our argument is to relate the extended linear threshold model 
to a random walk process and thus the complicate association among the nodes are handled
consequently.

Next, we introduce a new feature called \emph{activation score} to the threshold model, and consequently get 
a variant non-progressive threshold model. Under this variant, we proved that an advertiser can achieve $1/2$-approximation
on maximizing the average number of active nodes over a certain period of time, and $(1-1/e)$-approximation in
expectation with a randomized algorithm. Furthermore, we also consider the extension of the non-progressive threshold
model in the two-agents case, in which the similar approximation results can be achieved.  
}

\end{abstract}

\section{Introduction}

We consider the problem of an advertiser promoting a product in a social network.  
The idea of \emph{viral marketing}~\cite{Domingos2002,Jurvetson2000} is that
with a limited budget, the advertiser can persuade only a subset of individuals to use the new product, 
perhaps by giving out a limited number of free samples.
Then the popularity of the product is spread by word-of-mouth, 
i.e.\ through the existing connections between users in the underlying social network.

Information networks have been used to model such cascading behavior~
\cite{Wasserman1994,Domingos2002,Domingos2001,Goldenberg2001b,Goldenberg2001,Kempe2003,Kempe2005}. 
An information network is a directed edge-weighted graph, in which a node represents a user, 
whose behavior is influenced by its (outgoing) neighbors,
and the weight of an edge reflects how influential the corresponding neighbor is.  A node adopting
the new behavior is \emph{active} and is otherwise \emph{inactive}.
The \emph{threshold model}~\cite{Granovetter1978,Kempe2003}
is one way to
model the spread of the new behavior.  The resistance of a node $v$ to adopt the new behavior is represented
by a random threshold $\theta_v$ (higher value means higher resistance), where the randomness is used to 
model the different susceptibility of different users.
The new behavior is spread in the information network in discrete time steps.  An inactive node
changes its state to active if the weighted influence from the active neighbors in the previous time 
step reaches its threshold. We consider the \emph{non-progressive} case where an active node could 
revert back to the inactive state if the influence from its neighbors drops below its threshold.

\noindent \paragraph{\textbf{\emph{Our Contribution.}}}
We consider the non-progressive linear threshold model in this paper, which is 
the natural extension of the well known linear threshold model \cite{Kempe2003}.
In Section~\ref{sec:pre}, the formal definition of the non-progressive
linear threshold model is introduced, as well as the influence maximization problem.
In most existing works, for a set of initially active nodes, the influence is measured by 
the maximum number of active nodes. Since the existing works consider the progressive case, 
hence the number of active nodes increases step by step and achieves the maximum after at most $n$ time steps, 
where $n$ is the number of the nodes. However, in the non-progressive case, it is possible 
that the active status never become stable. Hence, we introduce the average number of active nodes
over a time period to measure the influence. Similar to the progressive case, 
the influence maximization problem considering the non-progressive linear threshold model is also
NP-hard (Section~\ref{app:hardness}). 
In order to approximate the optimal within a constant factor, a commonly used approach 
is to prove the monotone and submodular property of the objective function. Then the constant approximation
ratio algorithms are promised by using the results of Fisher et al.\ \cite{Fisher1978} and Calinescu et al.\ 
\cite{DBLP:journals/siamcomp/CalinescuCPV11}. However, this approach
is not generally applicable for the non-progressive linear threshold model, since
as showed in Section~\ref{appendix:nonsub}, the average number of active nodes is possibly not submodular.
As the main result of this paper, we studied the case when the information network is acyclic. 
As consistent with the intuition, the expected influence under the acyclic networks is submodular
and hence Fisher's technique (and Calinescu's technique) is applicable to achieve the constant approximation.
It should be noted that although the acyclic case looks much simpler than the general ones (where directed cycles
may exist), the solution to maximize the expected influence is not that easy. As it is proved
in Section~\ref{app:hardness}, the problem of influence maximization is still NP-Hard even for the 
case under acyclic information networks. Futhermore, to prove the submodularity of the expected influence, 
we still need some tricky technique (in this paper, we ) to handle complicated association between the status of the nodes. 
To see the association, consider time $t > 0$, the status of nodes at time $t$ are associated if
they share some common ancestors and the threshold of such an ancestor affects the status of its descendants. 
As this kind of association exist, it requires more carefully consideration of the nodes status and 
the analysis consequently become more complicate. In Section~\ref{sec:acyclic},
we introduce an equivalent process (called Path Effect) for the non-progressive linear threshold model, and
then the deep connection between this process and the random walk is proved via a coupling technique,  
which consequently leads to our final conclusion of the submodularity of the expected influence (under non-progressive linear threshold model). 

\noindent \paragraph{\textbf{\emph{Related Works.}}}
The cascading behavior in information networks was first studied
in the computer science community by Kempe, Kleinberg and Tardos~\cite{Kempe2003}.
They considered the Independent Cascade Model and the Linear Threshold Model,
the latter of which we generalize in this paper.  Their main focus was
the progressive case, and only reduced the non-progressive case
to the progressive one by assigning a new independent random threshold
to each node at every time step  such that the resulting
objective function is still submodular. 

%About the non-progressive case of Linear Threshold Model, one concrete model
%has been studied in \cite{ref:Kempe2003}. In this model, each individual's threshold
%is changed as time goes by. Because intuitively the threshold represents the
%hardness to convince an individual, then it is not very comfortable if we say
%an individual is easy to be convinced on one day and hard to be convinced on
%another day. For this reason,
%we consider the kind of non-progressive case that any individual's threshold is
%unchanged once it is chosen.

Kempe et al.~\cite{Kempe2003,Kempe2005} have also shown that the influence maximization problem
in such models is NP-hard.  Researchers often first show that
the objective functions in question are submodular and then apply 
submodular function maximization methods 
to obtain constant approximation ratio.
An example of such methods is the \emph{Standard Greedy Algorithm}, which is analyzed by
by Nemhauser and Fisher et al.~\cite{Nemhauser1978,Fisher1978}.
Loosely speaking, the Standard Greedy Algorithm (also known as the
\emph{Hill Climbing Algorithm}) starts with an empty solution, and
in each iteration
while there is still enough budget, we expand the current solution
by including an additional node that causes the greatest increase in the objective function.
Although the costs for transient and permanent nodes are different in our model,
the budget constraint can still be described by a matroid.
Under the matroid constraint, Fisher et al.~\cite{Fisher1978}
showed that the Standard Greedy Algorithm achieves $\frac{1}{2}$-approximation,
and  Calinescu et al.~\cite{DBLP:journals/siamcomp/CalinescuCPV11} 
introduced a randomized algorithm that
achieves $(1-\frac{1}{e})$-approximation in expectation.

In the above submodular function maximization algorithms,
the objective function needs to be accessed in each iteration.
However, to calculate the exact value of the objective function is in general hard~\cite{Chen2010}.
One way to resolve this is to estimate the value of the objective function by sampling.
Some works have used other ways to overcome this issue.
To improve the efficiency of the Standard Greedy Algorithm, Leskovec et al. \cite{Leskovec2007}
showed a \emph{Cost-Effective Lazy Forward} scheme, which makes use of the submodularity of the objective function and
avoids the evaluation of influence on those nodes
for which the incremental influence
in the previous iteration is less than that of some already evaluated node in the current iteration.
This scheme has been shown more efficient than the Standard Greedy Algorithm by experiments.
%
%Some works have been done to overcome this issue by implementing the greedy algorithm cleverly \cite{Leskovec2007} or designing
%heuristic method to achieve similar performance \cite{Chen2009}. 
Chen et al. \cite{Chen2009} also designed an improved scheme to speed up the Standard Greedy Algorithm
by using some efficiently computable heuristics that have similar performance.

%.
%Moreover, in \cite{Chen2009} they have mentioned the idea to discovery heuristics with comparable
%performance with the greedy hill climbing, and shown the \emph{Degree Discount Heuristic}, in which
%the already selected targets would be discounted in the calculation of degrees on the other nodes.
%%The idea to ``discount'' something is also the base of our \emph{Greedy Centrality Heuristic} method. 

Chen et al. \cite{Chen2011} considered how \emph{positive} and \emph{negative} opinions 
spread in the same network, which can be interpreted as the influence process involving two agents.
The influence maximization problem considering multiple competing agents in an information network
has also been studied in \cite{Hotelling1929,Dubey2006,Bharathi2007,Carnes2007}. 
We follow a similar setting in which a new comer can observe the strategies of existing agents,
and stategizes accordingly to maximize his influence in the network.

Mossel and Roch~\cite{Mossel2007} have shown that under more general submodular threshold functions
(as opposed to linear threshold functions), the objective function is still submodular
and hence the same maximization framework can still be applied.

Similar to our approach, the relationship between influence spreading and random walks has been
investigated by Asavathiratham et al.~\cite{Asavathiratham2001}, and Even-Dar and Shapira~\cite{EvenDar2007} in other information network models.

%Although to achieve the maximum influence is hard, it is possible to
%find a target group with influence not far from
%the optimum. All this kind of results are conditioned on some
%formally defined influence models. One of these models is called
%Linear Threshold Model, in which each individual has a independent
%threshold and he/she is convinced to buy a product once the value of an
%linear function (it is called activation function and determined by
%its neighbors opinions) exceeds his/her threshold. Previously, most of
%studies on Linear Threshold Model focus on the progressive cases. In
%progressive cases, once an individual is convinced, he/she is always
%convinced with no right to change his/her opinion again. In
%\cite{ref:Kempe2003}, they showed that with a specified definition
%of the activation function, the expectation of the influence for
%random thresholds is able to be approximated with a acceptable
%factor.

\ignore{

\noindent \paragraph{\textbf{\emph{Our Contribution.}}}
We consider the non-progressive threshold model in this paper. As the linear threshold model 
\cite{Kempe2003} is the most common one, our first approach is to extend it to adapt to the 
non-progressive assumption. In Section~\ref{sec:pre}, the formal definition of the non-progressive
linear threshold model is introduced, as well as the influence maximization problem.
In the most existing works, for a set of initially active nodes, the influence is measured by 
the maximum number of active nodes. Since the existing works consider the progressive case, 
hence the number of active nodes increases step by step and achieves the maximum after at most $n$ time steps, 
where $n$ is the number of the nodes. However, in the non-progressive case, it is possible 
that the active status never become stable. Hence, we introduce the average number of active nodes
over a time period to measure the influence. Similar to the progressive case, 
the influence maximization problem considering the non-progressive linear threshold model is also
NP-hard. In order to approximate the optimal within a constant factor, a commonly used approach 
is to prove the monotone and submodular property of the objective function. Then the constant approximation
ratio algorithms are promised by using the results of Fisher et al.\ \cite{Fisher1978} and Calinescu et al.\ 
\cite{DBLP:journals/siamcomp/CalinescuCPV11}. However, this approach
is not generally applicable for the non-progressive linear threshold model, since
as showed in Section~\ref{sec:acyclic}, the average number of active nodes is possibly not submodular
unless the underlying information network is acyclic.

for both of the one agent case (introduced in Section~\ref{sec:pre})
and the two agents case (introduced in Section~\ref{sec:two-agents}).
For the influence maximization problem, given an information network, 
a fixed budget $K$, and a bound $T$ on time, an advertiser chooses
a transient initial set $A$ and a permanent initial set $\widehat{A}$ 
subject to the budget constraint such that the objective
is to maximize
the expected number $\overline{\sigma}(A, \widehat{A})$ of 
active nodes over $T$ time steps, where the expectation is over the random choices of the thresholds.

As in previous models, this maximization problem is NP-hard (see Section~\ref{app:hardness}).
However, we prove the submodularity of the objective functions in both cases, and hence it can be 
approximately solved by submodular function maximization methods~
\cite{Fisher1978,DBLP:journals/siamcomp/CalinescuCPV11}. Our influence model has the following new features.

\noindent \textbf{\emph{Non-progressive influence and unchanging thresholds.}} 
Although non-progressive influence models~\cite{Kempe2003} have been studied before,
previous works assumed that a node receives a new independent threshold
at every time step and this assumption is important and necessary
in their proofs.  However, in the progressive case, it is assumed
that the random threshold of a node, once chosen initially, remains the same
throughout the process. We resolve this disparity between the two cases, and consider
the \emph{Non-progressive Threshold Model} in which
the initially randomly chosen threshold for each node remains the same throughout.

In order to make our model a generalization of the progressive ones, 
the advertiser can initially make a node active in two ways.  
An initially active node can be \emph{transient}, which means
the node could become inactive later, just like any other nodes.  On the other hand,
at a higher advertising cost, an initially active node can be \emph{permanent}, which
means it remains active throughout the process.  Observe that if all the initially active nodes
are permanent, then the process reduces to the progressive case.  Hence, our model includes the
progressive model as a special case.

\ignore{
\noindent \textbf{\emph{Choosing thresholds with non-increasing probability density function (pdf).}} 
In the classical Linear Threshold Model, it is assumed that the thresholds
are chosen from $(0,1)$ uniformly at random. Recall that the threshold 
reflects the resistance level of a user. 
In the one agent case
of our model, we can relax this assumption
and consider thresholds drawn independently
from a distribution whose pdf is non-increasing, i.e.,
smaller thresholds are at least as likely to be chosen
as larger thresholds.

\noindent \textbf{\emph{Influence with Activation Scores.}}
In previous models, users can only observe whether their neighbors
are active or inactive.  However, in practice, users can also
express their opinions on a product, for instance by giving ratings
or leaving comments.  In our model,
the opinion of a user is represented by an
\emph{activation score}, which for example can
reflect how much a user likes a product. A user is activated
if his current activation score is at least his threshold and
inactivated otherwise.  In our model, cascading behavior is modeled
as modifying a user's activation score based on observing his neighbors'
activation scores.
This reflects the situation that
it is usually a user's opinion that affects the others, but not necessarily whether 
he has bought a product or not. 
It is of independent interest that we can actually show that
if a user can only observe whether his neighbors are active or not
as in previous threshold models,
a direct generalization to
the non-progressive case (with unchanging threshold for each user)
no longer preserves the submodularity of the objective function
(Appendix~\ref{appendix:nonsub}). 

\noindent \textbf{\emph{Efficient evaluation of the objective function.}} 
Given an initial active set,
even computing the exact value of the objective function can be hard in some models~\cite{Chen2010}.
In such case, the only known method for estimating the objective function
given some initial active set is 
through random sampling, which can be expensive as multiple
accesses to the objective function is required in some submodular function maximization algorithms~\cite{Fisher1978,DBLP:journals/siamcomp/CalinescuCPV11}.
Our model does not have this issue, i.e., given
initial transient and permanent active sets,
the expected influence can be evaluated in polynomial time.
}

%
% that a company is planning to sale a new product in a city.
%They prepare some free samples for trial and obviously want to make
%full use of these samples. To maximize the benefits, they need to
%find a group of people, such that by using samples for free, those
%people in the target group will bring as many buyers as possible.
%Because the intersections between people in the real world are
%always complicated, it is usually hard to find the optimum target
%group. This fact is already noted by previous researchers and many
%studies showed that to find the target group with maximum influence
%is usually NP-hard.

\noindent \textbf{Our Results.}  Using similar techniques as~\cite{Asavathiratham2001,EvenDar2007}, we observe that
there is a random walk interpretation of our influence model. 
%Under the \NLTc model, this can be proved by 
%induction. However, establishing this connection under the \NLTd model is non-trivial,
%as the acyclic condition is necessary. 
As a result, due to this connection,
the objective function can be computed efficiently without the need of sampling.
Moreover, this random walk interpretation allows us to show that 
the resulting objective function $\overline{\sigma}(A, \widehat{A})$
is submodular on both arguments. Using the results of Fisher et al.~\cite{Fisher1978} and
Calinescu et al.~\cite{DBLP:journals/siamcomp/CalinescuCPV11} on submodular
function maximization, we give constant approximation ratio algorithms
for the influence maximization problem in our model.

%As a first step to understand such a richer and more complicated model,
%we restrict our attention to directed acyclic graphs.  Although this reduces the generality
%of the problem, this assumption might apply in some special cases.  For instance, in
%a company with a hierarchical structure, a worker might only value the opinions of more senior
%employees and be more likely to follow their actions.  On the other hand, the acyclic assumption
%does not make the problem trivial, as we show in Appendix~\ref{app:hardness} that the influence maximization
%problem remains NP-hard even for acyclic graphs.

%Given a transient initial set $A$ and a permanent initial set $\widehat{A}$,
%we show that the expected number $\overline{\sigma}(A, \widehat{A})$ of active nodes
%over $T$ time steps is a submodular function on both arguments.  We reduce the budget constraint
%to several matroids and apply the results of Fisher et al.~\cite{ref:Nemhauser1978b} and
%Calinescu et al.~\cite{ref:DBLP:journals/siamcomp/CalinescuCPV11} to obtain the following theorem.
\begin{theorem}\label{thm1}
Given an information network, a time period $[1,T]$, a budget $K$ and advertising costs,
% (transient or permanent) that are uniform over the nodes,
an advertiser can 
%use the Standard Greedy Algorithm to 
compute a transient initial set $A$ and 
a permanent initial set $\widehat{A}$ with total cost at most $K$ in polynomial time
such that under the non-progressive influence process 
(Model~\ref{model:avg} in Section~\ref{sec:pre}), the value of the objective function
(the expected average number of active nodes over the period $[1,T]$) is at least $\frac{1}{2}$ of the optimal value.  
Moreover, there is a randomized algorithm that outputs $A$ and $\widehat{A}$ such that
the expected value of the objective function is at least $1 - \frac{1}{e}$ of the optimal value, where $e$ is the natural number.
\end{theorem}

We also generalize our model to the case with two agents who
compete with each other to influence nodes.
An existing Agent $\mathcal{A}$ has made his choice and 
a new comer Agent $\mathcal{B}$ needs to strategize 
based on Agent $\mathcal{A}$'s decision.

\begin{theorem}\label{thm2}
Given an information network, an adversary (Agent $\mathcal{A}$) has made his choice of 
initial set $A$ and permanent set $\widehat{A}$ in advance. Then, given a time period $[1,T]$, 
a budget $K$ and advertising costs,
% (transient or permanent) that are uniform over the nodes,
an advertiser (Agent $\mathcal{B}$) can 
%use the Standard Greedy Algorithm to 
compute a transient initial set $B$ 
and a permanent initial set $\widehat{B}$ 
subject to the budget constraint
%with total cost at most $K$ 
in polynomial time
such that under the non-progressive influence process with two agents 
(Model~\ref{model:avg2} in Section~\ref{sec:two-agents}),
the value of the objective function (the expected average number of nodes active with state $\mathcal{B}$ over the period $[1,T]$)
is at least $\frac{1}{2}$ of the optimal value.  
Moreover, there is a randomized algorithm that outputs $B$ and $\widehat{B}$ such that
the expected value of the objective function
is at least $1 - \frac{1}{e}$ of the optimal value, where $e$ is the natural number.
\end{theorem}
\ignore{
\begin{theorem} \label{thm:2-approx}
Consider the Non-progressive Threshold Model.
Given an acyclic information network, a budget $K$ and advertising costs (transient or permanent) that are uniform over the nodes,
an advertiser can use the Standard Greedy Algorithm to compute a transient initial set $A$ and 
a permanent initial set $\widehat{A}$ with total cost at most $K$ in polynomial time
such that $\overline{\sigma}(A, \widehat{A})$ is at least $\frac{1}{2}$ of the optimal value.  Moreover,
there is a randomized algorithm that outputs $A$ and $\widehat{A}$ such that
the expected value of $\overline{\sigma}(A, \widehat{A})$ is at least $1 - \frac{1}{e}$ of the
optimal value, where $e$ is the natural number.
\end{theorem}

\noindent \emph{Efficient Heuristics.}  For general networks
that might contain cycles, the Standard Greedy Algorithm
still remains a viable method, even though there is no theoretical
guarantee on the quality of its outputs.  However, 
the expensive sampling procedures required to estimate the objective
function on the partial solutions throughout bring
doubts to the practicality of this method.
The random walk interpretation
inspires us to design efficiently computable greedy heuristics to greatly reduce the number of times we need to estimate the objective function.

In Section~\ref{sec:experiment}, we run experiments on both real and simulated networks.
The empirical results show that our greedy heuristics give solutions whose quality compares favorably to those from the Standard Greedy Algorithm, but allow much faster
running times.
}

}

\section{Preliminaries}\label{sec:pre}

\begin{definition}[Information Network]
An information network is a directed weighted graph $G=(V,E,b)$ with node set $V$ and edge set $E$,
where each edge $(v,u)\in E$ has some positive weight $0 < b_{vu} < 1$ 
which intuitively represents the influencing power of $u$ on $v$.
Denote the set of outgoing neighbors of a node $v$ by $\Gamma(v) := \{u\in V \mid  (v,u) \in E\}$.
In addition, for each $v \in V$, the total weight of its outgoing edges is at most $1$, 
i.e.\ $\sum_{u\in\Gamma(v)} b_{vu} \leq 1$.
\end{definition}

Without loss of generality, we assume that in the considered information network $G$,
$\sum_{u\in \Gamma(v)}b_{vu}$ is exactly $1$ for every node $v\in V$.
To achieve this requirement, for any given $G$, we add a void node $d$
and for each node $v \neq d$, we include $(v,d)$ into the set of edges,
and set $b_{vd}:=1-\sum_{u\in \Gamma(v) \setminus \{d\}}b_{vu}$.  
We can add a self loop with weight $1$ at the void node $d$.
Furthermore, $d$ is never allowed to be active initially, and hence will never be active. 
Unless explicitly specified, when we use the term \emph{node} in general, we mean a node other than the void node.

Next, we formally describe the extension of the classic linear threshold model for the adaption of the non-progressive behavior. 
A new feature of our model is that an initially active node can be either \emph{transient} or \emph{permanent}.

\begin{model}[Non-progressive Linear Threshold Model (\NLT)]\label{model:avg}
Consider an information network $G=(V,E,b)$. Each node in $V$ is associated
with a threshold $\theta_v$, which is chosen from $(0,1)$ independently and uniformly at random.
At time $t \geq 0$, every node $v$ is either \emph{active} $\mathcal{A}$ or \emph{inactive} $\mathcal{N}$.
Denote the set of active nodes at time $t$ by $A_t$. In the influence process, 
given a transient initial set $A\subseteq V$, a permanent initial set $\widehat{A}\subseteq V$,
and a configuration of thresholds $\mathbb{\theta} = \{\theta_v\}_{v\in V}$, the nodes update
their status according to the following rules. 
\begin{itemize}
\item[1.]
At time $t=0$, $A_0:= A \cup \widehat{A}$. 
\item[2.]
At time $t>0$, for each node $v\in V\setminus \widehat{A}$, compute the activation function 
$f_v(A_{t-1}) := \sum_{u\in A_{t-1}\cap \Gamma(v)} b_{vu}$. Then let 
$A_t := \{v\in V\setminus \widehat{A} | f_v(A_{t-1})\geq \theta_v\}\cup \widehat{A}$.
%\item[3.]
%If $x_t=x_{t-1}$ happens at any $t>0$, the procedure \emph{stabilizes}. \hubert{Why do we need this?}
%\li{Not necessarily.}
\end{itemize}
\end{model}

Without loss of generality, we can assume $A \cap \widehat{A} = \emptyset$, 
otherwise we can use $A \setminus \widehat{A}$ as the transient initial set instead. 
Given a transient initial set $A$ and a permanent initial set $\widehat{A}$, 
we measure the influence of the agent by the average number of active nodes over $T$ time steps, 
where $T$ is some pre-specified time scope in which the process evolves.
Observe that once the initial sets and the configuration of the thresholds are given, the
active sets $A_t$'s are totally determined.

\begin{definition}[Influence Function and Expected Influence]
Given an information network $G$, a transient initial set $A$,
a permanent initial set $\widehat{A}$, and a configuration $\theta$ of thresholds,
the average influence over time period $[1,T]$ is defined as
$\sigma^{[1,T]}_\theta(A, \widehat{A}) := \frac{1}{T} {\sum^{T}_{t=1}\mid A_t\mid}$.
For simplicity, we ignore the superscript $[1,T]$ in $\sigma$ when the target period is clear from
the context. We define the expected
influence $\overline{\sigma}$ as the expectation of $\sigma_\theta(A, \widehat{A})$
over the random choice of $\theta$, i.e.,
$ \overline{\sigma}(A, \widehat{A}):=\mathbb{E}_\theta[\sigma_\theta(A, \widehat{A})]$.
\end{definition}

\ignore{
\begin{remark} We will later restrict our attention to acyclic network for the 
\NLTd model.  In this case, the effect of the initial transient nodes will eventually vanish, as $T$ gets larger than $n$.  Hence, in order to measure the transient effect of the transient nodes in an acyclic network, we assume
the expected influence is considered over a small number $T$ of time steps.
\end{remark}
}
\begin{definition}[Influence Maximization Problem]
In an information network $G$, suppose
the advertising cost of a transient initial node is $c$ and that of a permanent initial node is $\widehat{c}$, where the costs are uniform
over the nodes. Given a budget $K$, the goal is to
find a transient initial set $A$ and a permanent initial set $\widehat{A}$ 
with total cost $c \cdot |A| + \widehat{c} \cdot |\widehat{A}|$ at most $K$
such that $\overline{\sigma}(A, \widehat{A})$ is maximized.
\end{definition}
The most technical part of the paper is to show that
the objective function $\overline{\sigma}$ is submodular
so that the maximization techniques of Fisher et al.~\cite{Fisher1978}
can be applied.
%\begin{definition}[Monotone]\label{def:monotone}
%A function $f:2^V \rightarrow \mathbb{R}$ is monotone if
%for any $A\subseteq A'\subseteq V$,  $f(A)\leq f(A')$ holds.
%\end{definition}
\begin{definition}[Submodular, Monotone]\label{def:submodular}
A function $f:2^V\rightarrow\mathbb{R}$ is submodular if
for any $A\subseteq B\subseteq V$ and $w \in V \setminus B$,
$f(B\cup\{w\})-f(B)\leq f(A\cup\{w\})-f(A)$ holds.
A function $f$ is monotone if for any $A\subseteq B$,
$f(A) \leq f(B)$.
A function
$g : 2^V \times 2^V \rightarrow \mathbb{R}$ is submodular (monotone), if
keeping one argument constant, the function is submodular (monotone) as a function on
the other argument.
\end{definition}
%\begin{definition}[Additive]\label{def:add}
%A function $f:2^V\rightarrow\mathbb{R}$ is additive if
%for any
%$A\subseteq V$, $f(A)=\sum_{u\in A} f(\{u\})$ holds.
%\end{definition}
%We first consider the simple case in which all the initially active nodes are transient, i.e.
%$\widehat{A} = \emptyset$.  In this case, the expected influence is a function of the set $A$ only.
%
%
%
%At the end of this section we introduce indicator variables, which are useful for discussing the
%expected influence.
In order to facilitate the analysis of the influence process,
we define indicator variables to consider the behavior
of individual nodes at every time step.
\begin{definition}[Indicator Variable]
In an information network $G$, given a transient initial set $A$
and a permanent initial set $\widehat{A}$, a node $v$ and a time $t$, let
$X^t_v(A,\widehat{A})$ be the indicator random variable
that takes value 1 if node $v$ is active at time $t$, and 0 otherwise.
%More precisely, given $A$ and $\theta$, $X^t_v(A,\theta):=1$ if $v$ is active at time $t$, and $X^t_v(A,\theta):=0$ otherwise.
%If we consider the indicator variable while keeping $\theta$ in randomness,
%then $X^t_v(A,\theta)$ is a random variable for $\theta$.
When $\widehat{A} = \emptyset$, we sometimes write
$X^t_v(A) := X^t_v(A,\emptyset)$.
\end{definition}
The indicator variable's usefulness is based on the following equality:
\[\overline{\sigma}(A, \widehat{A}) = \mathbb{E}\left[\frac{1}{T} \sum^{T}_{t=1} |A_t|\right] =
\frac{1}{T} {\sum^{T}_{t=1}\sum_{v\in V}\mathbb{E}\left[X^t_v(A, \widehat{A})\right]}.\]
Hence, if the function $(A, \widehat{A}) \mapsto
\mathbb{E}[X^t_v(A, \widehat{A})]$ is submodular and monotone, then so is $\overline{\sigma}$.

\ignore{
At the end of this section, we introduce the
random walk process that we later prove is
highly related to our influence model.  

\begin{model}[Random Walk Process]\label{model:R}
Consider an information network $G$. For any given node $v\in V$, 
we define the Random Walk Process as follows.
\begin{itemize}
\item[1.]
At time $t=0$, the walk starts at node $v$.
\item[2.]
Suppose at some time $t$, the current node is $u$. 
A node $w\in \Gamma(u)$ is chosen with probability $b_{uw}$.
The walk moves to node $w$ at time $t+1$. \footnote{Observe that if
$w$ is the void node, then the walk remains at $w$.}
\end{itemize}
\end{model}
\begin{definition}[Reaching Event]\label{def:R}
Let $G$ be an information network.
For any node $v$, subset $C \subseteq V$ and $t \geq 0$, 
we use $R^t_v(C)$ to denote the event that a Random Walk Process on $G$ starting from $v$ 
would be at a node in $C$ at precisely time $t$.
\end{definition}
\begin{definition}[Passing-Through Event] \label{defn:passing}
Let $G$ be an information network. For any node $v$, subset $C \subseteq V$ and
$t \geq 0$, we use $S^t_v(C)$ to denote the event that a Random Walk Process on $G$
starting from $v$ would reach a node in $C$ at time $t$ \textbf{or before}.
\end{definition}
}
%%%%%%%%%%%%%%%%%%%%%%%%%%%%%%%%%%%%%%%%%%%%%%%%%%%%%%%%%%%%%%%%%%%%%%%%%%%%

\section{Hardness of Maximization Problem}
\label{app:hardness}

We outline an NP-hardness proof for the maximization problem
in our setting via a reduction from vertex cover similar to that in (\cite{Kempe2003,Kempe2005}).
We show that the problem is still NP-hard, even for the special case when the network is acyclic, 
and each transient node and each permanent has the same cost, 
which means only permanent nodes will be used.

\begin{theorem} \label{th:hardness}
The influence maximization problem under the non-progressive linear 
threshold model is NP-hard even when the network 
is a directed acyclic graph and all the initially active nodes are permanent.
\end{theorem}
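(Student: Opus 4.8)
The plan is to reduce from the Vertex Cover decision problem, which is NP-complete, mirroring the argument of Kempe et al.~\cite{Kempe2003,Kempe2005}. Given a Vertex Cover instance $(H=(U,F),k)$, I would build an acyclic information network $G$ in which the candidate seeds are one node $X_u$ per vertex $u\in U$. Because transient and permanent nodes carry the same cost, an optimal solution uses only permanent seeds, so the budget $K$ reduces to choosing a permanent set $\widehat{A}\subseteq\{X_u : u\in U\}$ of size at most $k$. For each edge $e=\{u,w\}\in F$ I would attach an edge gadget---a small acyclic sub-structure whose only relevant connections lead to $X_u$ and $X_w$---engineered so that, over the window $[1,T]$, the gadget contributes a fixed positive amount to the expected influence when at least one of $u,w$ is chosen, and nothing when neither is chosen. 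Summing over the gadgets, the objective $\overline{\sigma}(\emptyset,\widehat{A})$ would then equal a constant plus a term proportional to the number of edges of $H$ covered by $\widehat{A}$.

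Concretely, I would set the gadget weights using the random-walk reachability view of the model (the same interpretation used for the submodularity proof): the expected activity of a gadget node equals the probability that a random walk started there reaches the chosen seed set $\widehat{A}$ within $T$ steps. Routing this walk so that a single trajectory can encounter both $X_u$ and $X_w$ makes ``edge $e$ is covered'' correspond to ``the walk hits $\widehat{A}$,'' which is an OR over the two endpoints counted at most once; this is exactly the diminishing-returns (coverage) behaviour that turns the problem into a Maximum-Coverage-style instance. I would then argue the decision equivalence: $H$ has a vertex cover of size at most $k$ if and only if some size-$k$ permanent set achieves $\overline{\sigma}\geq\tau$ for an explicit threshold $\tau$ (the constant plus the all-edges-covered contribution). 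Since the objective is monotone, it suffices to check seeds of size exactly $k$.

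The main obstacle is realizing the clean coverage contribution---one unit per covered edge, with no extra reward for covering both endpoints, and unit seeding cost per vertex---under thresholds drawn uniformly from $(0,1)$. A single node whose incoming weights sum to one has activation probability equal to its weighted active-input fraction, so a naive ``OR at one node'' yields a linear (modular) objective that is trivially maximizable and cannot encode Vertex Cover. Overcoming this is the crux: I would force the nonlinearity through the sequential reachability structure above (so that the redundancy of covering a second endpoint is genuine), use weight-one edges to make the key propagation steps deterministic while keeping $G$ acyclic, and then verify by a direct computation of the hitting probabilities that covered and uncovered edges contribute different, seed-independent amounts, yielding the required value gap. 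Confirming that this gap is robust to the residual randomness of the thresholds, and that the budget-to-cover-size correspondence is exact, are the remaining steps.
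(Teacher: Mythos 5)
Your high-level plan (reduce from Vertex Cover, note that equal costs let you restrict to permanent seeds) matches the paper's, but the crux of your construction --- the per-edge gadget whose random walk visits both $X_u$ and $X_w$ with probability $1$ and visits no other candidate seed --- cannot be realized in general graphs, and this is exactly the step you deferred to ``remaining steps.'' The obstruction is that the vertex nodes are shared among edges while each node has a single, fixed outgoing distribution, so the continuation of any walk after it enters $X_u$ is the same no matter which gadget it started from. Concretely: suppose the walk from the gadget of $e=\{u,w\}$ hits $X_u$ a.s., hits $X_w$ a.s., and a.s. avoids every other vertex node $X_z$ (this last property is forced, since otherwise seeding $X_z$ alone would make an uncovered edge contribute). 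Acyclicity fixes the visiting order, say $X_u$ first; by the Markov property the walk started at $X_u$ must then hit $X_w$ a.s. and afterwards avoid all vertex nodes a.s. But then $u$ cannot be the first-visited endpoint of any second edge $\{u,w'\}$: that would force the walk from $X_u$ to also hit $X_{w'}$ a.s., so the gadget of $\{u,w\}$ would register ``covered'' when only $w'$ is seeded. Similarly, a second-visited endpoint can never serve as the first-visited endpoint of another edge (its continuation would have to simultaneously avoid all vertex nodes and hit one a.s.). Hence each vertex can be the first endpoint of at most one edge and second endpoints of none, which confines your construction to star forests --- instances on which Vertex Cover is trivial. No choice of weights escapes this, and the same inclusion--exclusion argument rules out spreading the contribution over several gadget nodes; the ``OR counted once'' reward is simply not expressible by reachability through shared seed nodes.

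The paper avoids gadgets and additive coverage rewards altogether: it takes $T=1$ and pins down the \emph{exact optimum} instead of a coverage sum. Orient the edges of the Vertex Cover instance along a linear order of the vertices (this yields the DAG), add a dummy sink that childless nodes point to, and distribute each node's outgoing weight uniformly. With thresholds uniform on $(0,1)$, a node $v\notin\widehat{A}$ is active at time $1$ with probability $f_v(\widehat{A})=\sum_{u\in\Gamma(v)\cap\widehat{A}}b_{vu}$, and this equals $1$ precisely when \emph{all} of $v$'s out-neighbors are seeded. Consequently $\overline{\sigma}(\emptyset,\widehat{A})=n+1$ if and only if for every directed edge $(v,u)$ either $v\in\widehat{A}$ or $u\in\widehat{A}$, i.e.\ if and only if the dummy is seeded and the non-dummy seeds form a vertex cover; so a cover of size $k$ exists iff some $\widehat{A}$ of size $k+1$ attains value $n+1$. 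The per-edge OR you were trying to engineer via hitting probabilities falls out of the probability-$1$ (full activation) criterion for free, with no use of the random-walk machinery at all. If you want to salvage your write-up, keep your outer reduction but replace the gadget and the threshold $\tau$ by this exact-value test.
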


\begin{proof}
Given an undirected graph with $n$ vertices, we pick an arbitrary
linear ordering of the nodes and direct each edge accordingly to
form a directed acyclic graph. We add a dummy node and for nodes
with no outgoing edges, we add an edge from it to the dummy node.
Hence, the network has $n+1$ nodes.  For each node, the weights of
its outgoing edges are distributed uniformly.  The number $T$ of
time steps under consideration is $1$.

We claim that there is a vertex cover of size $k$ for the constructed network 
\emph{iff} there is a permanent initial set $\widehat{A}$ of size $k+1$ such
that $\overline{\sigma}(\emptyset, \widehat{A}) = n+1$.

Suppose there is a vertex cover $S$ of size $k$, 
then adding the dummy node to $S$ to form $\widehat{A}$ as the permanent
initial set, all nodes will be active in the next time step with probability $1$, 
and so $\overline{\sigma}(\emptyset, \widehat{A}) = n+1$.  
On the other hand, if the permanent initial set
$\widehat{A}$ has size $k+1$ and $\overline{\sigma}(\emptyset, \widehat{A})=n+1$, 
then the dummy node must be in $\widehat{A}$, and suppose $S$ the set of non-dummy nodes in $\widehat{A}$.  
If $S$ does not form a vertex cover for the given graph,
then there exists some edge $(u,v)$, where both nodes $u$ and $v$ are inactive initially, 
and hence the probability that $u$ is active in the next time step is strictly smaller than $1$.
\qed \end{proof}

\section{Information Network with Directed Cycles}
\label{appendix:nonsub}

In this section, we show that for cases where the information network has directed cycles, 
the expected influence function is not submodular in general. Before describing the example
that counters the submodularity, we introduce a conclusion that assists the argument. 

\begin{theorem}\label{thm:ind}
Given an information network $G$, if $A\mapsto\mathbb{E}[X^t_v(A)]$ is submodular 
for every node $v\in V$ and time $t > 0 $, then the expected influence is submodular.
On the other hand, if there exists a node $v^*\in V$ such that 
\[\frac{1}{T}{\sum^T_{t=1}\mathbb{E}[X^t_{v^*}(A)]}\]
is not submodular, then by modifying $G$, we can construct an information network, for which the
expected influence over period $[1,T]$ is not submodular.
\end{theorem}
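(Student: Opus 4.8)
The plan is to treat the two directions separately, the first being essentially immediate and the second requiring an amplification construction.

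\textbf{Forward direction.} First I would invoke the identity already recorded in Section~\ref{sec:pre}, namely
\[\overline{\sigma}(A) = \frac{1}{T}\sum_{t=1}^T \sum_{v\in V}\mathbb{E}[X^t_v(A)].\]
Since submodularity is preserved under nonnegative linear combinations, and each map $A\mapsto\mathbb{E}[X^t_v(A)]$ is submodular by hypothesis, the function $\overline{\sigma}$—being the sum of these terms scaled by $\tfrac{1}{T}$—is submodular as well. This disposes of the first claim with no further work.

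\textbf{Construction for the converse.} For the second claim the obstacle is that the non-submodular contribution of the single node $v^*$ could be masked by the submodular contributions of all the other nodes in the sum defining $\overline{\sigma}$. The idea is therefore to \emph{amplify} the role of $v^*$ until it dominates. Concretely, starting from $G$ I would attach $M$ fresh ``follower'' nodes $u_1,\dots,u_M$, each carrying a single outgoing edge $(u_i,v^*)$ of weight $1$ and no other edges, to form a modified network $G'$. Because no original node receives influence from any $u_i$, these followers are pure sinks: they do not perturb the dynamics of $G$, so $X^t_v$ is unchanged for every original $v$. On the other hand, since $\theta_{u_i}\in(0,1)$ and the only influence on $u_i$ is the weight $b_{u_iv^*}=1$, node $u_i$ is active at time $t$ exactly when $v^*$ was active at time $t-1$; that is $X^t_{u_i}(A)=X^{t-1}_{v^*}(A)$, so each follower is a one-step delayed copy of $v^*$, and in expectation every follower contributes the same time-averaged indicator $g(A):=\tfrac{1}{T}\sum_{t=1}^T\mathbb{E}[X^t_{u_i}(A)]$.

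With this construction the expected influence of $G'$ splits as $\overline{\sigma}_{G'}(A) = \overline{\sigma}_G(A) + M\cdot g(A)$, and I would finish by a magnitude comparison. The marginal-gain differences of $\overline{\sigma}_G$ are bounded in absolute value by some constant $C$ depending only on $|V|$ and independent of $M$; meanwhile, if $g$ fails submodularity then there are sets $A\subseteq B$ and an element $w\notin B$ witnessing a strictly positive violation $\delta>0$. Choosing $M$ with $M\delta > C$ makes the violation survive in $\overline{\sigma}_{G'}$, so $\overline{\sigma}_{G'}$ is not submodular.

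\textbf{Main obstacle.} The genuinely delicate point is that the follower's one-step delay shifts the relevant time window: a follower's average over $[1,T]$ equals $v^*$'s average over $[0,T-1]$, i.e.\ $g(A)=\tfrac{1}{T}\bigl([v^*\in A]+\sum_{t=1}^{T-1}\mathbb{E}[X^t_{v^*}(A)]\bigr)$, which involves $v^*$'s window $[0,T-1]$ rather than the hypothesized $[1,T]$. Since $[v^*\in A]$ is modular and hence irrelevant to submodularity, I would reconcile this by measuring the influence of $G'$ over the window $[1,T+1]$, equivalently re-indexing time by prepending one step: then the coefficient of $M$ becomes $\tfrac{1}{T+1}\bigl([v^*\in A]+\sum_{t=1}^{T}\mathbb{E}[X^t_{v^*}(A)]\bigr)$, whose non-modular part is precisely the quantity assumed non-submodular, so the violation $\delta$ above is guaranteed. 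Verifying that this realignment is legitimate—that enlarging the horizon by one step is compatible with the statement's $[1,T]$ normalization and that prepending a modular term cannot restore submodularity—is where I expect to spend the most care.
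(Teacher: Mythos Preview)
Your approach is exactly the paper's: the forward direction is the same linearity observation, and for the converse the paper's entire proof is to attach a set $L$ of fresh nodes, each with a single weight-$1$ edge to $v^*$, and observe that when $|L|$ is large the amplified contribution of $v^*$ dominates. The one-step delay you flag is real, but the paper does not address it at all; your workaround of shifting to the window $[1,T{+}1]$ (together with the remark that the extra $t=0$ term $[v^*\in A]$ is modular) is more care than the paper exercises and is entirely adequate for the theorem's actual role downstream, which is only to exhibit \emph{some} network and horizon with non-submodular expected influence.
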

\begin{proof}
The first statement is correct by observing that,
\begin{eqnarray*}
\mathbb{E}\left[\frac{1}{T}\sum^{T}_{t=1} |A_t| \right]&=&
\frac{1}{T}\sum^{T}_{t=1}\sum_{v\in V}\mathbb{E}[X^t_v(A)].
\end{eqnarray*} 
Consider the second statement. We add a set $L$ of nodes outside $V$ to the graph,
and for any node $v\in L$, we include an edge $(v,v^*)$ with weight $1$.  
Observe that this magnifies the effect of $v$ in the network.  When $|L|$ is large enough, 
the new expected influence is not a submodular function.
%
%
%
%Then the submodularity of the function
%$A\mapsto\sum^{T}_{t=s}\mathbb{E}[X^t_{v^*}(A)]/T$ is expended.
%When $\mid L\mid$ is large enough, this
%way of expending submodularity implies an information network for which the expected influence
%is not submodular.
\qed \end{proof}

Next, we are ready to describe the counter examples for the submodularity,

\begin{theorem}[Non-submodularity of Expected Influence]
There exists an information network with directed cycles, for which the expected
influence under \NLT~model is not submodular.
\end{theorem}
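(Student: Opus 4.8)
The plan is to reduce, via Theorem~\ref{thm:ind}, to exhibiting a single node whose transient influence is non-submodular, and then to build an explicit gadget containing a directed cycle through that node. By the second statement of Theorem~\ref{thm:ind}, it suffices to construct an information network and a node $v^*$ for which the map $A \mapsto \frac{1}{T}\sum_{t=1}^{T}\mathbb{E}[X^t_{v^*}(A)]$ (with $\widehat{A}=\emptyset$) fails to be submodular over some horizon $T$; attaching a sufficiently large independent set $L$ of fresh nodes, each with a weight-$1$ edge into $v^*$, then transfers this failure to the full objective $\overline{\sigma}$. Thus the whole problem collapses to producing one node, one small network, and two candidate seeds that violate the diminishing-returns inequality.

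First I would pin down where cyclicity is indispensable. For any node we have $\mathbb{E}[X^t_{v^*}(A)] = \Pr[\theta_{v^*} \le f_{v^*}(A_{t-1})]$. When the network is acyclic, $v^*$ is not among its own iterated influencers, so $f_{v^*}(A_{t-1})$ is independent of $\theta_{v^*}$; integrating out the uniform $\theta_{v^*}$ gives the linear recursion $\mathbb{E}[X^t_{v^*}(A)] = \sum_{u\in\Gamma(v^*)} b_{v^* u}\,\mathbb{E}[X^{t-1}_u(A)]$, which unfolds into a modular (hence submodular) set function. In particular the soft-AND behaviour of a linear threshold is \emph{linearized} by the uniform threshold and cannot by itself create complementarity. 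Consequently any counterexample must place $v^*$ on a directed cycle, so that $f_{v^*}(A_{t-1})$ and $\theta_{v^*}$ become positively associated: a small $\theta_{v^*}$ keeps $v^*$ active, which feeds back around the cycle to raise the influence $v^*$ subsequently receives. This self-reinforcement is exactly the ingredient that can make two seeds behave super-additively.

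Next I would realize this association with a concrete gadget. Take $v^*$ with two influencers $a$ and $b$ of equal weight, route feedback edges from $v^*$ back into each of $a$ and $b$ (forming two short directed cycles through $v^*$), and attach to $a$ and $b$ two external, seed-controllable inputs $x$ and $y$ that persist once activated (for instance through a self-sustaining loop of weight close to $1$). The weights are chosen so that a single seed delivers only enough influence to cross $\theta_{v^*}$ on part of the threshold range and fails to sustain the loop, whereas seeding both $x$ and $y$ drives $v^*$ strongly enough that the feedback locks the cycle into a sustained active phase. I would then evaluate $f(A):=\frac{1}{T}\sum_{t=1}^{T}\mathbb{E}[X^t_{v^*}(A)]$ for $A\in\{\emptyset,\{x\},\{y\},\{x,y\}\}$ over a short horizon by integrating the activation indicators (piecewise linear in the thresholds) against the joint law of $\{\theta_v\}_{v}$, crucially keeping $\theta_{v^*}$ coupled across time because of the cycle, and verify the strict inequality $f(\{x,y\})-f(\{y\}) > f(\{x\})-f(\emptyset)$, i.e.\ a violation of submodularity with $A=\emptyset\subseteq B=\{y\}$ and new element $w=x$.

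The hard part is this middle step: guaranteeing that the feedback-induced correlation produces a \emph{strict} complementarity rather than collapsing back to modularity. As the acyclic computation above shows, uniform thresholds tend to cancel the relevant cross terms exactly, so the weights and the horizon $T$ must be tuned so that the sustained-versus-unsustained dichotomy survives integration over $\theta_{v^*}$; verifying this requires tracking the joint distribution of the coupled indicators $X^t_{v^*}$ across several time steps rather than treating each step in isolation. Once a strict single-node violation is established, the magnification step of Theorem~\ref{thm:ind} is routine and completes the proof.
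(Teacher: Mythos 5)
Your high-level strategy coincides with the paper's: both arguments invoke Theorem~\ref{thm:ind} to reduce the statement to exhibiting one node $v^*$ for which $A \mapsto \frac{1}{T}\sum_{t=1}^{T}\mathbb{E}[X^t_{v^*}(A)]$ fails submodularity, and both then look for a cyclic gadget. The difference is that the paper finishes the job: it fixes a concrete weighted network (Fig~\ref{fig1}(a)), takes seed sets $\{x\}\subseteq\{x,y\}$ with added node $z$, and verifies that the marginal gain of $z$ is strictly larger under $\{x,y\}$ than under $\{x\}$ at every even time $t\geq 4$ and equal at odd times $t\geq 5$, so the time-averaged function violates submodularity once the horizon is large. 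Since the theorem is an existence statement, that explicit verified example \emph{is} the proof; your proposal defers exactly this step (your self-identified ``hard part''), so as written it is a plan for a proof rather than a proof.

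Moreover, the gadget you sketch provably cannot supply the missing verification. Take your topology with weights $b_{v^*a}=\alpha$, $b_{v^*b}=1-\alpha$, $b_{ax}=p$, $b_{by}=q$, arbitrary feedback weights $b_{av^*}$ and $b_{bv^*}$, and idealize $x,y$ as persistent (your limit of self-loop weight tending to $1$). Because the seeds persist, the active sets are monotone in time, and a short case analysis over $(\theta_a,\theta_b,\theta_{v^*})$ gives, for every $t\geq 2$,
\begin{align*}
\mathbb{E}[X^t_{v^*}(\{x\})] &= \alpha p, \qquad \mathbb{E}[X^t_{v^*}(\{y\})] = (1-\alpha)q, \qquad \mathbb{E}[X^t_{v^*}(\emptyset)] = 0,\\
\mathbb{E}[X^t_{v^*}(\{x,y\})] &= pq + \alpha p(1-q) + (1-\alpha)(1-p)q,
\end{align*}
so the submodularity defect
$\bigl(\mathbb{E}[X^t_{v^*}(\{x,y\})]-\mathbb{E}[X^t_{v^*}(\{y\})]\bigr)-\bigl(\mathbb{E}[X^t_{v^*}(\{x\})]-\mathbb{E}[X^t_{v^*}(\emptyset)]\bigr) = pq-\alpha pq-(1-\alpha)pq = 0$
identically, for \emph{all} choices of weights; note the feedback weights do not even enter. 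The structural reason is a catch-22: feedback from $v^*$ can recruit the second branch only after $v^*$ already fires, and in exactly the marginal range of $\theta_{v^*}$ where the second seed would have mattered, $v^*$ never fires to provide that feedback. So your design collapses to exact modularity --- the cancellation you worried about is unavoidable in this topology, not a matter of tuning. The mechanism that genuinely produces a violation, and that the paper's example exploits, is the opposite of ``self-sustaining'': transient seeds that die out, with activity then circulating around a directed cycle so that the same fixed thresholds are re-tested against time-varying inputs; this oscillation is precisely why the paper's verification distinguishes even times from odd times. By making the seeds effectively permanent, your gadget suppresses the only effect that can do the job; a repair must abandon persistence and carry out the verification on an oscillating gadget, which is exactly the content of the paper's Fig~\ref{fig1}(a) computation.
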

\begin{proof}
Based on Theorem~\ref{thm:ind}, it is sufficient to show that there
exists an information network, which contains a node $v^*$, such that
the function
$A\mapsto \frac{1}{T}  \sum^{T}_{t=1}\mathbb{E}[X^t_{v^*}(A)]$
is not submodular.

Consider the information network in Fig~\ref{fig1}(a). Each edge $(v,u)$ is
marked with its weight $b_{vu}$.

\begin{figure}[!htb]
\centering
\subfigure[directed graph with cycles]{\includegraphics[width=0.4\textwidth]{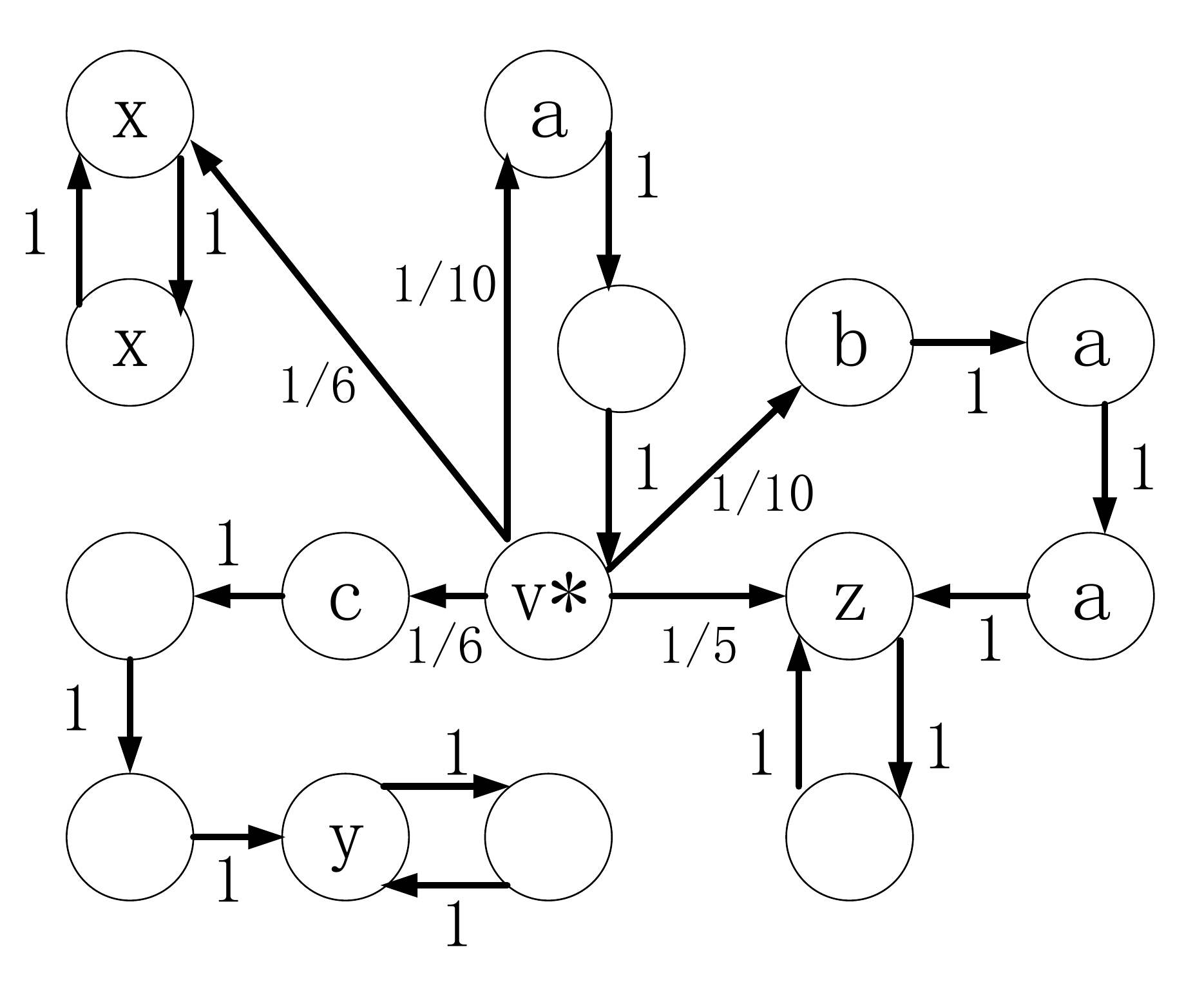}}
\subfigure[directed graph with self-loops]{\includegraphics[width=0.4\textwidth]{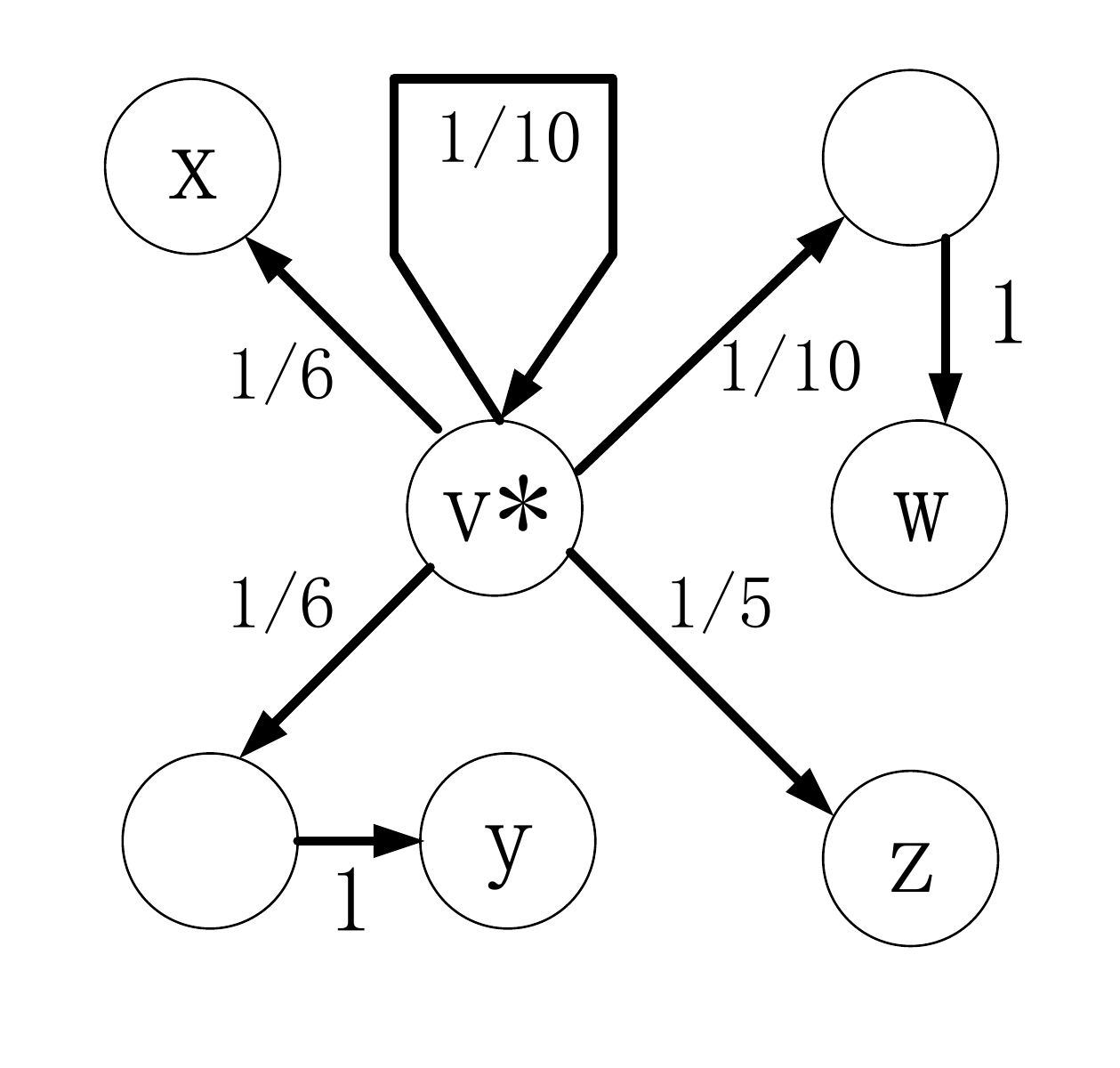}}
\caption{Counter examples for submodularity.}
\label{fig1}
\end{figure}

%    \caption{There exists an undirected graph with cycles, under which 
%	the expected influence of Model~\ref{model:I} is not submodular.
%    There exists an undirected graph with self-loops, under which 
%	the expected influence of Model~\ref{model:I} is not submodular.}

Let $S:=\{x\}$, $T:=\{x,y\}$ and focus on node $v^*$. Then
it is easy to check the following facts.
\begin{enumerate}
\item
For any even time $t\geq 4$,
$\mathbb{E}[X_{v^*}^t(S\cup\{z\})]-\mathbb{E}[X^t_{v^*}(S)]
<\mathbb{E}[X_{v^*}^t(T\cup\{z\})]-\mathbb{E}[X^t_{v^*}(T)]$.
\item
For any odd time $t\geq 5$,
$\mathbb{E}[X^t_{v^*}(S\cup\{z\})]-\mathbb{E}[X^t_{v^*}(S)]
=\mathbb{E}[X^t_{v^*}(T\cup\{z\})]-\mathbb{E}[X^t_{v^*}(T)]$.
\end{enumerate}
This implies the function $A\mapsto\sum^{T}_{t=1}\mathbb{E}[X^t_{v^*}(A)]/T$ is not
submodular when $T$ is large enough.
\qed \end{proof}

\begin{theorem}[Networks with Self-loops]
There exists an information network in which the only directed cycle
is a self-loop (on a non-void node), such that the 
according expected influence under \NLT~model %Model~\ref{model:avg} 
is not submodular.
\end{theorem}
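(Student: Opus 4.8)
The plan is to follow the same two-step strategy used for the cyclic counterexample: first exhibit a single node whose time-averaged activation probability fails to be submodular, and then invoke the converse part of Theorem~\ref{thm:ind}. The novelty here is to realize this non-submodularity while keeping the unique directed cycle a single self-loop. Thus it suffices to build a network containing a non-void node $c$, whose only cycle is the self-loop at $c$, such that $A \mapsto \frac{1}{T}\sum_{t=1}^{T}\mathbb{E}[X^t_c(A)]$ is not submodular; feeding $c$ into a large independent set $L$ as in Theorem~\ref{thm:ind} (each such edge points \emph{out of} $c$ and creates no new cycle) then promotes this to non-submodularity of $\overline{\sigma}$ itself.

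The construction I would use puts a self-loop of weight $\beta$ at $c$ and attaches two reconverging acyclic paths feeding $c$: direct edges into $c$ from two seed candidates $y,z$, so their influence reaches $c$ at time $1$, and a length-two path from a third seed $g$ through an auxiliary node $m$, so $g$'s influence reaches $c$ at time $2$; all remaining out-weight at each node is routed to the void node $d$. The essential effect of the self-loop is \emph{momentum}: once $c$ is active at time $1$, the term $\beta X^{1}_c$ lowers the additional input needed to keep $c$ active at time $2$. Consequently, writing $r=f_c(A_0)$ for the time-$1$ input, the quantity $\mathbb{E}[X^2_c(A)]$ is, as a function of $r$, a sigmoidal (first convex, then concave) curve, flat at the $g$-level for small $r$, rising with slope $1$ across an interval of length $\beta$, then flat again. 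The convex kink is precisely where submodularity can break.

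I would then test the defining inequality on $S:=\{g\}$, $T:=\{g,y\}$ and the added element $z$, with $g$ held fixed in all four sets so that the time-$2$ pulse from $g$ pins the operating point at the convex kink. Evaluating the recursion $X^t_c=[\,f_c(A_{t-1})+\beta X^{t-1}_c\ge\theta_c\,]$ and integrating over $\theta_c\in(0,1)$ yields a clean separation by time step: the $t=1$ term is modular, so $z$'s marginal there is identical on $S$ and $T$; for the chosen weights the $t\ge 3$ terms are constant across all four sets, so $z$'s marginal there vanishes on both; but at $t=2$ the element $z$ contributes nothing on $S$ (we remain on the flat part) while contributing strictly on $T$ (we move up the slope-$1$ part). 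Summing, the marginal gain of $z$ is strictly larger on $T$ than on $S$, so the time average is not submodular. A concrete instantiation such as $\beta=0.2$, weight $0.2$ on the $g$-path into $c$, and direct weights $0.2$ and $0.1$ for $y$ and $z$ keeps every node's outgoing weight at most $1$ and makes the violation explicit (the $t=2$ marginals being $0$ versus $0.1$).

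The main obstacle is a conservation phenomenon that makes naive self-loop examples come out \emph{exactly} submodular: for a lone self-loop the supermodular momentum gain realized at large thresholds is, after integrating over $\theta_c$, compensated almost tautologically by a saturation loss at small thresholds, so the net marginal of a seed is conserved and the inequality is tight. Defeating this cancellation is exactly why I route the varying seeds $y,z$ into the \emph{time-$1$} (convex) input while using a \emph{separate} background pulse from $g$ to set the operating point, and why the proof must check that the contributions at the remaining time steps are genuinely \emph{constant} in $z$ rather than merely submodular, so that the single non-submodular time step survives the time average instead of being washed out.
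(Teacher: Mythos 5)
Your proposal is correct and takes essentially the same approach as the paper: it invokes the reduction of Theorem~\ref{thm:ind} and then exhibits a small network in which the self-loop's memory (the same threshold being reused at every time step) makes the marginal gain of an added seed at time $t=2$ strictly larger on the larger of two nested seed sets --- exactly the mechanism behind the paper's counterexample in Fig.~\ref{fig1}(b) with $S=\{x\}$, $T=\{x,w,z\}$ and added node $y$. Your explicit verification that the $t=1$ term is modular and the $t\ge 3$ terms are constant across all four seed sets is a point of care the paper leaves implicit in its claim that finding a single non-submodular time step suffices.
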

\begin{proof}
Based on Theorem~\ref{thm:ind}, it is sufficient to show that there
exists an information network, which contains a node $v^*$, such that the function
$A\mapsto\sum^{T}_{t=1}\mathbb{E}[X^t_{v^*}(A)]/T$
is not submodular. Actually, we only need to find a network in which there
is a node $v^*$ and a particular time step $t$, such that $A\mapsto\mathbb{E}[X^t_{v^*}(A)]$
is not submodular.
Consider the information network in Fig~\ref{fig1}(b). Each edge $(v,u)$ is
marked with the weight $b_{vu}$. Let $S:=\{x\}$, $T:=\{x,w,z\}$,
and focus on node $v^*$. Then it is easy to check that for time $t=2$,
\begin{eqnarray*}
\mathbb{E}[X_{v^*}^t(S\cup\{y\})]-\mathbb{E}[X^t_{v^*}(S)]
<\mathbb{E}[X_{v^*}^t(T\cup\{y\})]-\mathbb{E}[X^t_{v^*}(T)].
\end{eqnarray*}

Thus, $A\mapsto\mathbb{E}[X^2_{v^*}(A)]$ is not submodular.
\qed \end{proof}

\section{Acyclic Information Networks}\label{sec:acyclic}

In this section, we consider the information networks without directed cycles. 
As we proved in Section~\ref{app:hardness}, 
the influence maximization problem under \NLT~model %Model~\ref{model:avg} 
is still NP-Hard even when the underlying network
has no directed cycles.

Note that with the assumption of acyclic information networks, for any node $v\in V$ other than the void node, 
the set of its outgoing neighbors $\Gamma(v)$ has no directed path back to $v$.
\footnote{Hence, the self-loop at the void node does not really
interfere with the acyclic assumption.}
Intuitively, during the influence procedure, 
$v$'s choice of threshold $\theta_v$ can never affect the states of nodes in $\Gamma(v)$.
To describe this fact formally, we introduce a random object on $\Omega$, 
where $\Omega$ is the sample space, which is essentially the set of all possible configurations of the thresholds.

\begin{definition}[States of Nodes over Time]\label{def:rand_ob}
%Suppose $\mathbb{T}=\{1,2,\ldots, T\}$ is the range of time steps in question.
Suppose $\Omega$ is the sample space, and $W$ is a subset of nodes.
%and the set $\mathbb{S} := \{\mathcal{A}, \mathcal{N}\}$ of states,
Define a random object
$\Pi_W:\Omega\rightarrow \{\mathcal{A}, \mathcal{N}\}^{\mid W\mid \times T }$,
such that for $\omega \in \Omega$, $v \in W$ and $1 \leq t \leq T$, $\Pi_W(\omega) (v,t)$
indicates the state of node $v$ at time $t$ at the sample point $\omega$.
\end{definition}

\begin{lemma}[Independence]\label{claim:history}
Suppose $v\in V$ and  $\eta\in[0,1]$,
and let $W$ be any subset of $V$ with no directed path to $v$.
Then, we have $Pr[\theta_v\leq\eta\mid \Pi_{W}]=\eta$, under \NLT~model. %Model~\ref{model:avg}.
\end{lemma}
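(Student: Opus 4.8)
The plan is to prove Lemma~\ref{claim:history} by establishing that the state information recorded in $\Pi_W$ is a deterministic function of the thresholds $\{\theta_u\}_{u \in U}$ for some set $U$ of nodes that provably excludes $v$, after which the independence of $\theta_v$ from those thresholds gives the claim immediately. Concretely, I would first define $U$ to be the set of all nodes lying on some directed path terminating in a node of $W$ (equivalently, all ancestors of $W$, together with $W$ itself), and argue that $v \notin U$. This is where the acyclicity hypothesis is used: since $W$ has no directed path to $v$, no node of $W$ is reachable from $v$, so $v$ cannot be an ancestor of any node in $W$; hence $v \notin U$.

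\textbf{Key steps.} First I would show, by induction on $t$, that for every $w \in W$ the state $\Pi_W(\omega)(w,t)$ depends only on the restriction of $\omega$ to the coordinates $\{\theta_u : u \in U\}$. The base case $t=0$ is immediate since $A_0 = A \cup \widehat{A}$ is fixed and independent of any threshold. For the inductive step, recall from Model~\ref{model:avg} that a non-permanent node $w$ is active at time $t$ precisely when $f_w(A_{t-1}) = \sum_{u \in A_{t-1} \cap \Gamma(w)} b_{wu} \geq \theta_w$; thus $\Pi_W(\omega)(w,t)$ is determined by $\theta_w$ together with the states at time $t-1$ of the outgoing neighbors $\Gamma(w)$. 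Since every $u \in \Gamma(w)$ is an ancestor of $w \in W$ (or in $W$), we have $\Gamma(w) \subseteq U$, and by the inductive hypothesis their time-$(t-1)$ states depend only on $\{\theta_u : u \in U\}$; together with $w \in U$ this closes the induction. Consequently the entire random object $\Pi_W$ is a measurable function of $\{\theta_u : u \in U\}$ alone.

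\textbf{Concluding the argument.} Since the thresholds $\{\theta_u\}_{u \in V}$ are drawn independently and $v \notin U$, the variable $\theta_v$ is independent of the $\sigma$-algebra generated by $\{\theta_u : u \in U\}$, and hence independent of $\Pi_W$. Therefore the conditional law of $\theta_v$ given $\Pi_W$ equals its unconditional law, which is uniform on $(0,1)$, giving
\[
\Pr[\theta_v \leq \eta \mid \Pi_W] = \Pr[\theta_v \leq \eta] = \eta
\]
for every $\eta \in [0,1]$, as required.

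\textbf{Main obstacle.} The genuinely delicate point is the inductive claim that $\Pi_W$ factors through $\{\theta_u : u \in U\}$, since it requires tracking that the dependency set never ``leaks'' to a descendant of $v$; the acyclicity assumption is exactly what guarantees that the ancestor set $U$ is closed under taking outgoing neighbors of members of $W$ without ever capturing $v$. Care is also needed for permanent nodes, whose states are fixed throughout and hence trivially independent of all thresholds, so they do not disturb the factorization. I would make sure the definition of ``ancestor'' and the reachability relation are stated precisely enough that ``$W$ has no directed path to $v$'' translates cleanly into ``$v$ is not an ancestor of any node in $W$,'' which is the crux of excluding $\theta_v$ from the relevant coordinate set.
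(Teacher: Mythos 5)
Your overall strategy is the same as the paper's own proof: the paper likewise argues that $\Pi_W$ is a deterministic function of the thresholds other than $\theta_v$ and then concludes by independence of the thresholds; your explicit induction on $t$ merely fills in the factorization step that the paper asserts in a single sentence. In substance the argument is right, and your treatment of permanent nodes and the final conditioning step are fine.

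However, your write-up confuses the two directions of reachability, and one stated inference is false. In this model a node $w$ is driven by its \emph{outgoing} neighbors $\Gamma(w)$, so the thresholds that can influence $\Pi_W$ are those of nodes reachable \emph{from} $W$ along directed edges; hence $U$ must be the set of nodes on directed paths \emph{originating} in a node of $W$, not ``terminating in a node of $W$'' as you first define it. Your later usage (``every $u \in \Gamma(w)$ is an ancestor of $w$,'' hence $\Gamma(w) \subseteq U$) shows you intend the correct set, but under your literal definition the induction would fail, since a node of $\Gamma(w)$ need not have any path back into $W$. Likewise, the sentence ``since $W$ has no directed path to $v$, no node of $W$ is reachable from $v$'' is a non sequitur: the hypothesis says exactly that $v$ is not reachable from $W$, and says nothing about paths from $v$ into $W$. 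Indeed, in the very application the paper makes of this lemma, $W = \Gamma(v)$, every node of $W$ \emph{is} reachable from $v$ in one hop, so your intermediate claim is false precisely where the lemma is needed. Fortunately it is also unnecessary: once $U$ is defined as the nodes reachable from $W$ (together with $W$ itself), the hypothesis ``$W$ has no directed path to $v$'' literally \emph{is} the statement $v \notin U$. Two smaller repairs: state the induction for all nodes of $U$ (which is closed under taking outgoing neighbors), not only for $w \in W$, since you apply the inductive hypothesis to nodes of $\Gamma(w)$ that need not lie in $W$; and note that the lemma itself carries no acyclicity hypothesis --- the no-path condition is the assumption, and acyclicity enters only later, when the paper invokes the lemma with $W = \Gamma(v)$.
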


\begin{proof}
In \NLT~model, %Model~\ref{model:avg}, 
the randomness comes from the choices of thresholds $\theta=\{\theta_v\}_{v\in V}$.
The sample space is actually the set of all possible configurations of thresholds $\theta$.

Note that, the event $\theta_v\leq\eta$ is totally determined by the choice of $\theta_v$, 
and the value of $\Pi_W$ is totally determined by the choice of all $\theta_u$'s such that $u\neq v$. 
From the description of \NLT~model, the choices of thresholds are independent over different nodes.
This implies that, for any $Q$ in the range of $\Pi_W$, the events $\theta_v\leq\eta$ and $\Pi_W=Q$ are independent.

Hence, we get $Pr[\theta_v\leq\eta\mid\Pi_{W}]=Pr[\theta_v\leq\eta]=\eta$.
\qed \end{proof}

\subsection{Connection to The Random Walk}

Consider time $ t > 0$. The status of nodes at time $t$ may be associated, since they can share
some common ancestors and the threshold of such an ancestor affects the status of its descendants. 
This association between different nodes cause more complicates for the analysis. 
In order to assist the analysis and handle the association carefully, 
we introduce a random walk process and show that 
this random walk process share an interesting connection with \NLT~model. %Model~\ref{model:avg}. 
%This connection is revealed in Lemma~\ref{lemma:additive}.
Next, we introduce the random walk process. 

\begin{model}[Random Walk Process (\RW)]\label{model:R}
Consider an information network $G=(V,E,b)$. For any given
node $v\in V$, we define a random walk process as follows.
\begin{itemize}
\item
At time $t=0$, the walk starts at node $v$.
\item
Suppose at some time $t$, the current node is $u$.
A node $w\in \Gamma(u)$ is chosen with probability $b_{uw}$.
The walk moves to node $w$ at time $t+1$.\footnote{Observe that if $w$ is the void node,
then the walk remains at $w$.}
\end{itemize}
\end{model}

\begin{definition}[Reaching Event]\label{def:R}
%Consider an information network $G$ on which the Random Walk Process is run.
For any node $v$, subset $C \subseteq V$ and $t \geq 0$,
we use $R^t_v(C)$ to denote the event that a random walk starting from $v$ would
reach a node in $C$ at precisely time $t$.
\end{definition}

Next, we show the connection between the \NLT~model and the \RW~model for the case
when the permanent initial set is empty. The more general case (arbitrary permanent initial set)
will be considered later.

%Model~\ref{model:avg} and Model~\ref{model:R}.

%(Theorem~\ref{lemma:additive}) stronger than this claim.
\begin{lemma}[Connection between \NLT~model and \RW~model]\label{lemma:additive}
Consider an acyclic information network $G$, and let $v$ be a non-void node,
and $1 \leq t \leq T$.  On the same network $G$, consider the \NLT~process on a
transient initial set $A$ and the \RW~process starting at $v$.
Then, $\mathbb{E}[X^t_v(A)] = Pr[R^t_v(A)]$.
%
%
%\sum_{a \in A} \mathbb{E}[X^t_v(\{a\})]$, where $\mathbb{E}[X^t_v(\{a\})]$
%is the expected influence if the only initially active node $a$ is
%transient.
\end{lemma}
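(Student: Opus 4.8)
The plan is to prove the identity by induction on the time $t$, carried out simultaneously over all nodes $v$ (including the void node $d$, for which both sides vanish identically and which acts as an absorbing base of the recursion). For the base case $t=0$: since $\widehat{A}=\emptyset$ we have $A_0=A$, so $X^0_v(A)=\mathbf{1}[v\in A]$, while the walk sits at $v$ at time $0$, giving $\Pr[R^0_v(A)]=\mathbf{1}[v\in A]$; these agree. For the void node both sides are $0$, since $d$ is never active and $d\notin A$, while its self-loop keeps the walk at $d$ forever.

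For the inductive step the key move is to convert the threshold comparison into the neighbor-activation probability. Fix a non-void node $v$ and write $X^t_v(A)=\mathbf{1}[\theta_v\leq f_v(A_{t-1})]$, which is legitimate because $v\notin\widehat{A}$. Let $W$ be the set of all nodes having no directed path to $v$. By acyclicity $\Gamma(v)\subseteq W$, and moreover the states of the nodes of $\Gamma(v)$ at every time are determined solely by the thresholds of descendants of $v$, all of which lie in $W$; hence $f_v(A_{t-1})=\sum_{u\in\Gamma(v)}b_{vu}\,X^{t-1}_u(A)$ is measurable with respect to $\Pi_W$. Applying Lemma~\ref{claim:history}, $\theta_v$ is conditionally uniform on $(0,1)$ given $\Pi_W$, so
\[
\mathbb{E}[X^t_v(A)\mid\Pi_W]=\Pr[\theta_v\leq f_v(A_{t-1})\mid\Pi_W]=f_v(A_{t-1}).
\]
Taking expectations and using linearity then yields $\mathbb{E}[X^t_v(A)]=\sum_{u\in\Gamma(v)}b_{vu}\,\mathbb{E}[X^{t-1}_u(A)]$.

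To close the loop I would combine the induction hypothesis with a first-step analysis of the walk. Substituting $\mathbb{E}[X^{t-1}_u(A)]=\Pr[R^{t-1}_u(A)]$ for each outgoing neighbor $u$ (including $u=d$, where both quantities are $0$), and comparing against the one-step decomposition of the random walk of Model~\ref{model:R}, namely $\Pr[R^t_v(A)]=\sum_{u\in\Gamma(v)}b_{vu}\,\Pr[R^{t-1}_u(A)]$ (the walk from $v$ first moves to $u$ with probability $b_{vu}$ and must then reach $A$ in exactly $t-1$ further steps), I obtain $\mathbb{E}[X^t_v(A)]=\Pr[R^t_v(A)]$, completing the induction. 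A minor point to attend to is the step $t=1$, where $f_v(A_0)=f_v(A)$ is simply a constant, so the conditioning argument degenerates and the identity follows directly.

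I expect the main obstacle to be the measurability-and-independence step. One must verify with care that $f_v(A_{t-1})$ is a function of the states of nodes having no path back to $v$, so that $\theta_v$ remains (conditionally) uniform after conditioning on $\Pi_W$. This is precisely where acyclicity is indispensable and where Lemma~\ref{claim:history} is invoked: without it, $\theta_v$ could influence which neighbors in $A_{t-1}\cap\Gamma(v)$ are active, the factorization $\mathbb{E}[X^t_v(A)]=\mathbb{E}[f_v(A_{t-1})]$ would fail, and the neat correspondence with the random walk would break down.
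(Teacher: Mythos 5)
Your proof is correct, but it takes a genuinely different route from the paper's. The paper does not argue directly about $\mathbb{E}[X^t_v(A)]$: it introduces an auxiliary \emph{Path Effect} process (Model~\ref{model:P}) that augments the \NLT~dynamics with random influence paths $P^t_v$, observes (Remark~\ref{rem:XI}) that the augmentation reproduces the same active sets, proves an ``independence of choice'' lemma (Lemma~\ref{lem:StW}) stating that the recorded neighbor $P^t_v[t]$ equals $u$ with probability $b_{vu}$ independently of $\Pi_{\Gamma(v)}$, and then inducts on $t$ to show that the source-event probability $Pr[I^t_v(C)]$ satisfies the same recursion as the random-walk probability $Pr[R^t_v(C)]$ (Lemma~\ref{thm:IR}); the lemma follows by taking $C=A$. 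In effect, the paper builds the random walk inside the influence process as a sample-path coupling. You avoid the auxiliary process entirely: from Lemma~\ref{claim:history} you get $\mathbb{E}[X^t_v(A)\mid \Pi_W]=f_v(A_{t-1})$, hence by linearity of expectation $\mathbb{E}[X^t_v(A)]=\sum_{u\in\Gamma(v)}b_{vu}\,\mathbb{E}[X^{t-1}_u(A)]$, which you match against the first-step decomposition of $Pr[R^t_v(A)]$. Both arguments rest on the same acyclicity-based independence and both are inductions on $t$ over the same linear recursion, but yours is more economical and makes explicit that the correlations among neighbors' states --- the difficulty the paper emphasizes --- are invisible at the level of expectations, since linearity requires no independence among the $X^{t-1}_u$'s. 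What the paper's heavier construction buys is a bit more information: a realization-by-realization account of which initial node is responsible for each activation, the identity $Pr[I^t_v(C)]=Pr[R^t_v(C)]$ for every $C\subseteq V$, and the fact that this quantity does not depend on $A$; downstream, however (Lemmas~\ref{lemma:reaching} and~\ref{lemma:general_con}), only the bare identity you prove is used. One cosmetic slip: the states of the nodes in $\Gamma(v)$ are determined by the thresholds of the descendants of the nodes in $\Gamma(v)$ (the strict descendants of $v$), not ``descendants of $v$''; since acyclicity guarantees $v$ is not among these, your measurability and independence claims stand as written --- indeed, for measurability alone the inclusion $\Gamma(v)\subseteq W$ already suffices.
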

\begin{proof}
This lemma is the key point in our argument, for which the proof is not obvious. 
To assist the proof of Lemma~\ref{lemma:additive}, we next introduce a process
called ``Path Effect'' which is an ``equivalent'' presentation of the \NLT~model, 
and devote all the remaining part of this subsection to this lemma. 
\qed \end{proof}

We next introduce the \PE process which augments the \NLT~model and defines (random) 
auxiliary array structures $P^t_v$ known as the \emph{influence paths} to record the
influence history.
%, which is determined by the transient initial set
%$A$ and the configuration $\omega\in\Omega$ of all random parts.
%(Later it is shown that the present sample space $\Omega$ contains
%more randomness besides all possible threshold configurations.)
Intuitively, if $v$ becomes active at time $t$, then the path
$P^t_v$ shows which of the initially active nodes is
responsible.
%When there is no risk of confusion, we simply write
%$P^t_v$ for $P^t_v(A,\omega)$.
An important invariant is that node
$v$ is active at time $t$ if and only if $P^t_v[0] \in A$.

\begin{model}[Path Effect Process (\PE)]\label{model:P}
Consider an information network $G=(V,E,b)$.
Each node $v\in V$ is associated with a
threshold $\theta_v$, which is chosen from $(0,1)$ uniformly at random.

Given a transient initial set $A \subseteq V$ and a configuration of the thresholds
$\theta=\{\theta_v\}_{v\in V}$, for each node $v$ and each
time step, the influence paths $P^t_v$
are constructed in the following influence procedure.

\begin{itemize}
\item At time $t=0$, for any $v\in V$, $P^0_v[0]=v$.
\item At time $t>0$, define the active set at the previous time step
as $A_{t-1}=\{u\in V\mid P^{t-1}_u[0]\in A\}$.
For each node $v\in V$, we compute
$f_v(A_{t-1}) := \sum_{u\in \Gamma(v)\cap A_{t-1}}b_{vu}$. Then,
\begin{enumerate}
\item[a.]
If $f_v(A_{t-1}) \geq \theta_v$, choose node $u\in \Gamma(v)\cap A_{t-1}$
with probability
$\frac{b_{vu}}{f_v(A_{t-1})}$;
\item[b.]
If $f_v(A_{t-1})<\theta_v$, choose node $u \in \Gamma(v)\setminus A_{t-1}$
with probability
$\frac{b_{vu}}{1-f_v(A_{t-1})}$.
\end{enumerate}
Once $u$ is chosen, let $P^t_v[0,\ldots, t-1] := P^{t-1}_u[0,\ldots,t-1]$ and
$P^t_v[t]=u$.\footnote{Observe that if $v$ is the void node, then $P^t_v[0,\ldots,t]=[void,\ldots,void]$.}
\end{itemize}
\end{model}

\begin{remark}\label{rem:XI}
Observe that given an information network with a transient initial set $A$
and a configuration $\theta$ of thresholds. Both of the \NLT~model 
and the \PE~process produce exactly
the same active set $A_t$ at each time step $t$.
\end{remark}

\begin{definition}[Source Event] \label{def:I}
Consider an information network $G$ on which the \PE~process is run on the
initial active set $A$.
For any subset $C \subseteq V$, we use $I^t_v(C)$ to denote the event that
$P^t_v[0]$ belongs to $C$. If $C \subseteq A$, this event
means $v$'s state at time $t$ is the same as those of the nodes in $A$ at time 0 and hence is active.
We shall see later on that the event $I^t_v(C)$ is independent of $A$ and hence the notation has no dependence on $A$.
\end{definition}

When the given network is acyclic, the \PE~process has an interesting
property.
\begin{lemma}[Acyclicity Implies Independence of Choice]\label{lem:StW}
Consider an information network $G=(V,E,b)$ on which
the \PE~process is run with some initial active set.
If $G$ is acyclic, for any non-void node $v\in V$ and node $u\in\Gamma(v)$, we have
$Pr[P^t_v[t]=u\mid\Pi_{W}] = b_{vu}$,
where $W=\Gamma(v)$.  Recall that $\Pi_W$ carries the information about the states of the nodes in $W$
at every time step.
\end{lemma}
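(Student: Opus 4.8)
The plan is to fix an arbitrary realization $\Pi_W = Q$ and show that the conditional probability $Pr[P^t_v[t]=u \mid \Pi_W = Q]$ equals $b_{vu}$ no matter what $Q$ is; since the answer does not depend on $Q$, the desired identity $Pr[P^t_v[t]=u \mid \Pi_W] = b_{vu}$ follows. The whole argument rests on cleanly separating the two sources of randomness that determine the successor $P^t_v[t]$: the threshold $\theta_v$, which decides through rules (a) and (b) of Model~\ref{model:P} whether $v$ becomes active and hence from which subset of $\Gamma(v)$ the successor is drawn; and the fresh selection randomness of rules (a)/(b) that picks the successor once that subset is fixed.

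First I would observe that conditioning on $\Pi_W = Q$ pins down $f_v(A_{t-1})$. Since $\Pi_W$ records the state of every node of $W = \Gamma(v)$ at every time step, it determines exactly which outgoing neighbors of $v$ lie in $A_{t-1}$ (for $t \geq 2$ these are read off $Q$ at time $t-1$, while for $t=1$ they are the deterministic initial states). Writing $f := f_v(A_{t-1}) = \sum_{w\in\Gamma(v)\cap A_{t-1}} b_{vw}$, the value $f$ is therefore a constant once $Q$ is fixed, and the partition of $\Gamma(v)$ into active neighbors $\Gamma(v)\cap A_{t-1}$ and inactive neighbors $\Gamma(v)\setminus A_{t-1}$ is likewise determined. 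Moreover, by Remark~\ref{rem:XI} the states recorded in $\Pi_W$ coincide with the \NLT~active sets and hence are a function of the thresholds alone, so $\Pi_W$ is independent of the per-step selection randomness used in rules (a)/(b).

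Next I would invoke the Independence Lemma~\ref{claim:history}. Because $G$ is acyclic and $W=\Gamma(v)$ consists of outgoing neighbors of $v$, no node of $W$ has a directed path back to $v$, so $\theta_v$ is independent of $\Pi_W$ and $Pr[\theta_v \le f \mid \Pi_W = Q] = f$. Now split according to whether $u$ is active or inactive at time $t-1$, both determined by $Q$. If $u \in \Gamma(v)\cap A_{t-1}$, then $P^t_v[t]=u$ forces rule (a), which requires $f \ge \theta_v$ and then selects $u$ with probability $b_{vu}/f$ using randomness independent of $\theta_v$ and $\Pi_W$; hence $Pr[P^t_v[t]=u \mid \Pi_W = Q] = Pr[\theta_v \le f \mid \Pi_W=Q]\cdot \frac{b_{vu}}{f} = f\cdot\frac{b_{vu}}{f} = b_{vu}$. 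Symmetrically, if $u \in \Gamma(v)\setminus A_{t-1}$, then rule (b) applies, requiring $f < \theta_v$ (probability $1-f$) and then selecting $u$ with probability $b_{vu}/(1-f)$, so $Pr[P^t_v[t]=u \mid \Pi_W = Q] = (1-f)\cdot\frac{b_{vu}}{1-f} = b_{vu}$. The degenerate values $f=0$ and $f=1$ cause no division issues, since the corresponding case is then vacuous. As $b_{vu}$ is independent of $Q$, we conclude $Pr[P^t_v[t]=u \mid \Pi_W] = b_{vu}$.

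The main obstacle is getting the conditional-independence bookkeeping exactly right, rather than any computation. One must simultaneously verify three facts under the conditioning on $\Pi_W$: that it fixes $f$ together with the active/inactive partition of $\Gamma(v)$; that it leaves $\theta_v$ uniform, which is precisely where acyclicity and Lemma~\ref{claim:history} enter; and that the successor-selection randomness of rules (a)/(b) remains independent of both $\theta_v$ and $\Pi_W$ (justified by the \NLT~equivalence of Remark~\ref{rem:XI}, which shows $\Pi_W$ is a function of the thresholds). Once these are in place, the telescoping of $f$ against $b_{vu}/f$ in case (a) and of $1-f$ against $b_{vu}/(1-f)$ in case (b) makes both cases collapse to the same value $b_{vu}$.
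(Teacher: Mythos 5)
Your proposal is correct and follows essentially the same route as the paper's proof: fix a realization $\Pi_W = Q$, note that it determines $\eta(Q) = f_v(A_{t-1})$ and the active/inactive partition of $\Gamma(v)$, split into the two cases of rules (a) and (b), and use Lemma~\ref{claim:history} (acyclicity) to get $Pr[\theta_v \le \eta(Q) \mid \Pi_W = Q] = \eta(Q)$, so that the factors cancel to $b_{vu}$ in each case. If anything, you are slightly more careful than the paper, since you explicitly justify (via Remark~\ref{rem:XI}) that the successor-selection randomness is independent of $\theta_v$ and $\Pi_W$, and you handle the degenerate values $f \in \{0,1\}$, both of which the paper leaves implicit.
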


\begin{proof}
It is sufficient to prove that, for any value $Q$ in the range of $\Pi_{W}$,
$Pr[P^t_v[t]=u\mid \Pi_{W}=Q]=b_{vu}$
holds.
Because $f_v(A_{t-1})$ is determined by the states of $v$'s outgoing neighbors,
once $Q$ is fixed, $\eta(Q)=f_v(A_{t-1})$ is determined. We consider two cases.

\noindent \textbf{(1) When $u$ is in $A_{t-1}$ according to $Q$}.
We have

$Pr[P^t_v[t]=u\mid (\theta_v\leq\eta(Q))\cap(\Pi_{W}=Q)]
=\frac{b_{vu}}{\eta(Q)}$.

Since $u \in A_{t-1}$ holds, the event $\{P^t_v[t] = u\}$ implies that $\{ \theta_v \leq \eta(Q) \}$.
Hence, we have
\begin{eqnarray*}
Pr[P^t_v[t]=u\mid \Pi_W=Q] & =  & Pr[(P^t_v[t]=u) \cap (\theta_v \leq \eta(Q))\mid \Pi_W=Q] \\
& = & Pr[P^t_v[t]=u\mid (\theta_v\leq\eta(Q))\cap(\Pi_{W}=Q)] \cdot  \\
& & Pr[\theta_v\leq\eta(Q)\mid \Pi_W=Q].
\end{eqnarray*}

Because $G$ is acyclic, $W$ has no directed path to $v$.
By Lemma~\ref{claim:history}, we have
\begin{eqnarray*}
Pr[\theta_v\leq\eta(Q)\mid \Pi_{W}=Q]&=&\eta(Q).
\end{eqnarray*}
Hence, we have proved that $Pr[P^t_v[t]=u\mid \Pi_{W}=Q]=b_{vu}$.

\noindent \textbf{(2) When $u$ is not in $A_{t-1}$ according to $Q$}.
The proof of this case is similar to the previous one.
\qed \end{proof}

%From the above lemma, we directly get $Pr[P^t_v[t]=u]=b_{vu}$. This means in the Path Effect Process,
%at any time $t>0$, $v$ chooses his outgoing neighbor $u$ with probability
%$b_{vu}$.

Recall the events $I^t_v(C)$ and $R^t_v(C)$ introduced in
Definitions~\ref{def:I} and~\ref{def:R}.
%We next prove an important relation between Path Effect Process
%and Random Walk Process, which holds when
%the given information network is acyclic.
The following lemma
immediately implies Lemma~\ref{lemma:additive} with $C = A$,
and using the observation $\mathbb{E}[X^t_v(A)]=Pr[X^t_v(A)=1]=Pr[I^t_v(A)]$ from Remark~\ref{rem:XI}.

\begin{lemma}[Connection between the \PE~process and the \RW~process]\label{thm:IR}
Suppose the information network $G$ is acyclic, and $A$ is the transient initial set. For any
$C\subseteq V$, any non-void node $v\in V$ and $t\geq0$,
we have $Pr[I^t_v(C)]=Pr[R^t_v(C)]$.
In particular, the probability $Pr[I^t_v(C)]$ is independent of $A$.
\end{lemma}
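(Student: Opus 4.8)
The plan is to prove the stronger statement $Pr[I^t_v(C)]=Pr[R^t_v(C)]$ by induction on $t$, simultaneously for all nodes $v$ (including the void node) and all subsets $C$. Since the right-hand side depends only on the graph $G$ and not on the transient set $A$, this will also deliver the claimed independence of $Pr[I^t_v(C)]$ from $A$. For the base case $t=0$, both events reduce to the deterministic condition $v\in C$: the influence path satisfies $P^0_v[0]=v$ by construction, while the walk sits at $v$ at time $0$; for the void node both sides equal $\mathbf{1}[\textnormal{void}\in C]$ for every $t$, since the void node is absorbing in both the \PE~and \RW~processes.

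For the inductive step I would decompose $I^t_v(C)$ according to the last node recorded in the influence path. Because $P^t_v[0]=P^{t-1}_u[0]$ whenever $u=P^t_v[t]$, the event $I^t_v(C)=\{P^t_v[0]\in C\}$ is the disjoint union over $u\in\Gamma(v)$ of $\{P^t_v[t]=u\}\cap I^{t-1}_u(C)$, so that
\[
Pr[I^t_v(C)] = \sum_{u\in\Gamma(v)} Pr\bigl[\{P^t_v[t]=u\}\cap I^{t-1}_u(C)\bigr].
\]
The goal is to show that each summand equals $b_{vu}\cdot Pr[I^{t-1}_u(C)]$, for then the random-walk first-step identity $Pr[R^t_v(C)]=\sum_{u\in\Gamma(v)} b_{vu}\, Pr[R^{t-1}_u(C)]$, combined with the induction hypothesis $Pr[I^{t-1}_u(C)]=Pr[R^{t-1}_u(C)]$, closes the argument.

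To obtain this factorization I would condition on $\Pi_W$ with $W=\Gamma(v)$ and argue that, given $\Pi_W$, the events $\{P^t_v[t]=u\}$ and $I^{t-1}_u(C)$ are independent. The selection $\{P^t_v[t]=u\}$ is a function of the threshold $\theta_v$ together with the fresh randomness used to pick a neighbour at $v$ at step $t$, whereas $I^{t-1}_u(C)$ is determined by the thresholds and choices made strictly forward of $v$ (at $u$ and the nodes reachable from $u$). Acyclicity is exactly what keeps these two blocks of randomness disjoint: since there is no directed path from $u\in\Gamma(v)$ back to $v$, the threshold $\theta_v$ never enters the construction of $P^{t-1}_u$. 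Moreover $\Pi_W$ is itself a function of the forward randomness alone, so conditioning on it leaves $\theta_v$ and the choice at $v$ independent of that block; the two events are therefore conditionally independent given $\Pi_W$. Applying Lemma~\ref{lem:StW} to the first factor gives $Pr[\{P^t_v[t]=u\}\mid\Pi_W]=b_{vu}$, whence
\[
Pr\bigl[\{P^t_v[t]=u\}\cap I^{t-1}_u(C)\bigr]
= \E\bigl[\, b_{vu}\cdot Pr[I^{t-1}_u(C)\mid\Pi_W]\,\bigr]
= b_{vu}\cdot Pr[I^{t-1}_u(C)],
\]
using the tower property, which is precisely what the induction requires.

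I expect the conditional-independence step to be the main obstacle, as it is the one place where acyclicity and the fixed-threshold assumption are genuinely used, and where one must carefully separate the threshold randomness from the auxiliary path-choice randomness. Lemma~\ref{lem:StW} (which in turn rests on the independence established in Lemma~\ref{claim:history}) does most of the heavy lifting by showing that the final step of the influence path mimics a random-walk transition even after conditioning on the full state history of $\Gamma(v)$; the remaining and delicate task is to verify that this conditioning does not secretly correlate the selection made at $v$ with the source of $u$'s path.
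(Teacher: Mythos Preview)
Your proposal is correct and follows essentially the same route as the paper: induction on $t$, decomposing $I^t_v(C)$ according to $u=P^t_v[t]$, invoking Lemma~\ref{lem:StW} to obtain $Pr[P^t_v[t]=u\mid\Pi_{\Gamma(v)}]=b_{vu}$, and matching the resulting recursion against the first-step expansion of $Pr[R^t_v(C)]$. Your treatment of the factorization step is in fact more careful than the paper's, which tacitly writes $I^{t-1}_u(C)$ as a $\Pi_{\Gamma(v)}$-measurable event; you correctly observe that it is only a function of the randomness forward of $v$ and hence conditionally independent of the selection at $v$ given $\Pi_{\Gamma(v)}$, which is exactly what is needed.
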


\begin{proof}
We use induction on $t$.
For $t=0$, we have $Pr[I^0_v(C)]=1$ \emph{iff} $v\in C$ and
$Pr[R^0_v(C)]=1$ \emph{iff} $v\in C$. Hence, $Pr[I^0_v(C)]=Pr[R^0_v(C)]$.

Suppose $Pr[I^t_v(C)]=Pr[R^t_v(C)]$ holds for
all $v\in V$ at any time $t<k$.

We consider the case $t=k$ and fix some non-void node $v$. Let $W := \Gamma(v)$.
Recall that the random object $\Pi_W$ carries information about the states
of all $v$'s outgoing neighbors at every time step. Let
$\mathbb{C}^{k-1}_u$ be the set of the values for $\Pi_W$ under which
$P^{k-1}_u[0]\in C$.

Observing that the events
$\Pi_W=Q$ for different $Q$'s are mutually exclusive, we have
\[Pr[I^k_v(C)]=\sum_{u\in\Gamma(v)}\sum_{Q\in\mathbb{C}^{k-1}_u}
Pr[\Pi_{W}=Q\mid P^k_v[k]=u] \cdot Pr[P^k_v[k]=u].\]

By Lemma~\ref{lem:StW}, we have
$Pr[P^t_v[t]=u\mid \Pi_{W}]=b_{vu}=Pr[P^t_v[t]=u]$, which implies for any value $Q$ of $\Pi_{W}$,
it holds that $Pr[\Pi_{W}=Q\mid P^k_v[k]=u]=Pr[\Pi_{W}=Q]$. Consequently, we have
\begin{eqnarray*}
Pr[I^k_v(C)]
&=&\sum_{u\in\Gamma(v)}\sum_{Q\in\mathbb{C}^{k-1}_u}
Pr[\Pi_{W}=Q] \cdot Pr[P^k_v[k]=u]\\
&=&\sum_{u\in\Gamma(v)}Pr[I^{k-1}_u(C)] \cdot Pr[P^k_v[k]=u].
\end{eqnarray*}

By induction hypothesis $Pr[I^{k-1}_u(C)]=Pr[R^{k-1}_u(C)]$, we get,
\begin{eqnarray*}
Pr[I^k_v(C)]&=&\sum_{u\in\Gamma(v)}Pr[R^{k-1}_u(C)] \cdot Pr[P^k_v[k]=u]\\
&=&\sum_{u\in\Gamma(v)}Pr[R^{k-1}_u(C)]b_{vu}.
\end{eqnarray*}

The last term is just $R^k_u(C)$, according to the description of the \RW~process. 
This completes the inductive step of the proof.
\qed \end{proof}

%We are now ready to prove
%Lemma~\ref{lemma:additive}.

%\begin{theorem}
%Consider the non-progressive Linear Threshold Model. If the given information network
%$G$ is acyclic, $\mathbb{E}[X^t_v(A)]$ is additive as a function of $A$.
%\end{theorem}

%
%
%\begin{proofof}{Lemma~\ref{lemma:additive}}
%Based on Remark~\ref{rem:XI},
%we have $\mathbb{E}[X^t_v(A)]=Pr[X^t_v(A)=1]=Pr[I^t(A,v)]$. Then,
%it suffices to show that,
%$Pr[I^t(A,v)]=\sum_{u\in A}Pr[I^t(\{u\},v)]$. This is easy to prove, by
%observing that $Pr[I^t(A,v)]=Pr[R^t_v(A)]$ (Lemma~\ref{thm:IR})
%and $Pr[R^t_v(A)]=\sum_{u\in A}Pr[R^t_v(\{u\})]$.
%\end{proofof}

%
%
%Recalling Remark~\ref{rem:compute_R}, we can compute $\mathbb{E}[X^t_v(A)]$ by considering the
%transition matrix $M$ of the Random Walk Process.

In the next subsection, we will reduce the general case with non-empty permanent initial set 
to the case when only transient initial set. Furthermore, 
we can prove the final conclusion (Theorem~\ref{thm:2-approx-Avg}).

\subsection{Submodularity of Acyclic \NLT}

At first, we consider the case where the permanent initial
set $\widehat{A}$ is non-empty. We show that this general case can be reduced
to the case where only transient initial set $A$ is non-empty, by the following transformation.
Suppose $G$ is an information network, with transient
initial set $A$ and permanent initial set
$\widehat{A}$, and
$T$ is the number of time steps to be considered. Consider
the following transformation on the network instance.  For
each node $y \in \widehat{A}$, do the following:
\begin{enumerate}
\item Add a chain $D_y$
of $T$ dummy nodes to the network: starting from the head node of the chain, exactly
one edge with weight 1 points to the next node, and so on, until the end node is reached.
\item Remove all outgoing edges from $y$. Add exactly one outgoing edge with weight 1 from $y$
to the head of the chain $D_y$
\end{enumerate}
See Fig~\ref{fig2} for an example of the chain of dummy nodes.
\begin{figure}
\centering
\includegraphics[width=0.5\textwidth]{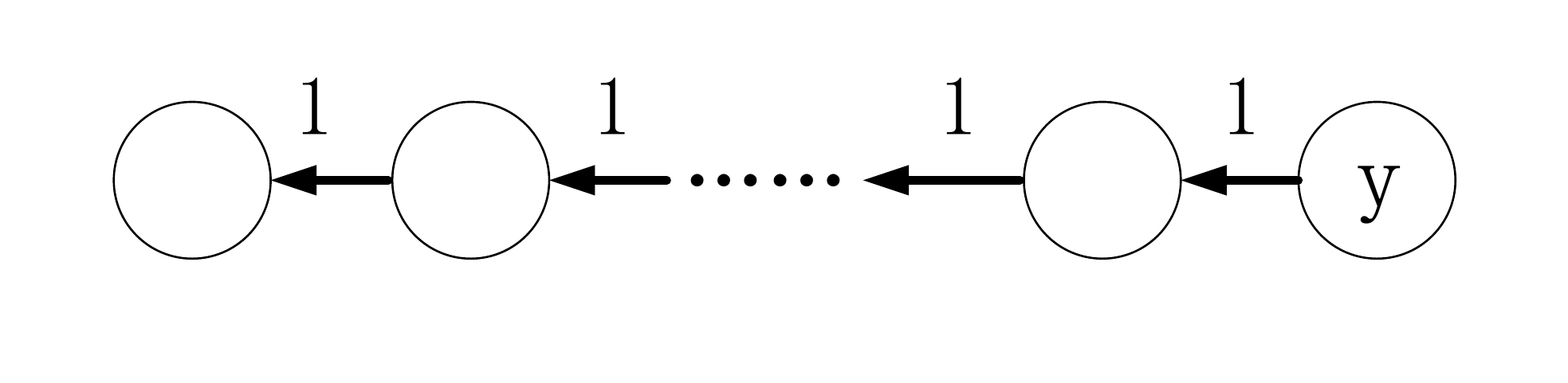}
\caption{The chain of dummy nodes}
\label{fig2}
\end{figure}
Let $D := \cup_{y \in \widehat{A}} D_y$ be the set of dummy nodes. 
We call the new network $\overline{G}(\widehat{A})$ the transformed network of $G$ with
respect to $\widehat{A}$.
When there is no risk of confusion, we simply write $\overline{G}$. 
The transformed instance on $\overline{G}(\widehat{A})$ only has
 $A \cup \widehat{A} \cup D$ as the transient initial set
and no permanent initial node.
%Instead of running Model~\ref{model:I} on $G$ with transient initial set $A$
%and permanent initial set $\widehat{A}$,
%we run the model on $\overline{G}[\widehat{A}]$ with only
%transient initial set $A \cup \widehat{A} \cup D$.
The initially active dummy nodes in $D$ ensure that every node $y \in \widehat{A}$ is active for $T$ time steps.
We use the notation convention that we add an overline
to a variable (e.g., $\overline{X}$), if it is associated with the transformed network.
\begin{remark} \label{remark:transform_equiv}
For any non-dummy, non-void node $v$,
\[X^t_v(A, \widehat{A}) = \overline{X}^t_v(A \cup \widehat{A} \cup D).\]
\end{remark}
\begin{lemma}\label{lemma:reaching} Suppose we are given an instance on information network
 $G$, with transient initial set $A$ and permanent initial set $\widehat{A}$.  
Let $v$ be any non-void node in $G$ and $0 \leq t \leq T$.  Suppose in the
transformed network $\overline{G}(\widehat{A})$, for
any subset $C$ of nodes in $G$, $\overline{R}^t_v( C)$ is the event that starting
at $v$, the \RW process on $\overline{G}$ for $t$ steps ends at
a node in $C$. Then,
\begin{eqnarray*}
\mathbb{E}[X^t_v(A,\widehat{A})] = \sum_{u\in A}Pr[\overline{R}^t_v(\{u\})]
+\sum_{y\in\widehat{A}}\sum^t_{i=0}Pr[\overline{R}^i_v(\{y\})].
\end{eqnarray*}
\end{lemma}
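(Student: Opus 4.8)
The plan is to push the whole computation onto the transformed acyclic network $\overline{G}(\widehat{A})$, where there are no permanent nodes, and then invoke the \RW-characterization of Lemma~\ref{lemma:additive} that we already have for the purely transient case. First I would rewrite the left-hand side using Remark~\ref{remark:transform_equiv}: since $v$ is a non-void node of $G$ (hence non-dummy),
\[\mathbb{E}[X^t_v(A,\widehat{A})] = \mathbb{E}[\overline{X}^t_v(A\cup\widehat{A}\cup D)].\]
The transformed network $\overline{G}$ is still acyclic — deleting the outgoing edges of each $y\in\widehat{A}$ cannot create a cycle, and each appended chain $D_y$ consists of fresh nodes joined by a directed path with a single edge entering from $y$ — so Lemma~\ref{lemma:additive} applies verbatim on $\overline{G}$ with transient initial set $B := A\cup\widehat{A}\cup D$ and no permanent nodes, giving $\mathbb{E}[\overline{X}^t_v(B)] = Pr[\overline{R}^t_v(B)]$. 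Because a walk occupies exactly one node at time $t$ and $A$, $\widehat{A}$, $D$ are pairwise disjoint, I would split this reaching probability over individual targets:
\[Pr[\overline{R}^t_v(B)] = \sum_{u\in A}Pr[\overline{R}^t_v(\{u\})] + \sum_{y\in\widehat{A}}\Big(Pr[\overline{R}^t_v(\{y\})] + \sum_{d\in D_y}Pr[\overline{R}^t_v(\{d\})]\Big).\]

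The heart of the argument, and the step I expect to require the most care, is to collapse each chain contribution $Pr[\overline{R}^t_v(\{y\})] + \sum_{d\in D_y}Pr[\overline{R}^t_v(\{d\})]$ into $\sum_{i=0}^{t}Pr[\overline{R}^i_v(\{y\})]$. Writing the chain as $d_{y,1},\dots,d_{y,T}$ with head $d_{y,1}$, I would use that in $\overline{G}$ the chain is reachable \emph{only} through $y$, and that every edge along $y\to d_{y,1}\to\cdots\to d_{y,T}$ has weight $1$, so the walk traverses it deterministically. Hence the event ``the walk is at $d_{y,j}$ at precisely time $t$'' coincides with ``the walk is at $y$ at precisely time $t-j$'', i.e.\ $Pr[\overline{R}^t_v(\{d_{y,j}\})] = Pr[\overline{R}^{t-j}_v(\{y\})]$, where this is $0$ when $t-j<0$ and where acyclicity guarantees each node is visited at most once so the events are unambiguous. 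Reindexing $i=t-j$, the chain contribution becomes $\sum_{i=\max(0,\,t-T)}^{t}Pr[\overline{R}^i_v(\{y\})]$.

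Finally I would invoke the hypothesis $0\le t\le T$ to conclude $\max(0,t-T)=0$, so the chain is long enough to realise every offset down to $i=0$; the extreme case $v=y$ (reached at time $0$) needs $d_{y,t}$ to exist, which is exactly where the chain length $T$ is consumed. Summing over $y\in\widehat{A}$ and recombining with the transient terms then yields the claimed identity. The only two points that genuinely need attention are checking acyclicity of $\overline{G}$ so that Lemma~\ref{lemma:additive} may be applied, and the deterministic-chain correspondence between reaching $d_{y,j}$ at time $t$ and reaching $y$ at time $t-j$; everything else is bookkeeping of disjoint reaching events.
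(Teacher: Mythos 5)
Your proposal is correct and follows essentially the same route as the paper's own proof: pass to the transformed network via Remark~\ref{remark:transform_equiv}, apply Lemma~\ref{lemma:additive} with transient set $A\cup\widehat{A}\cup D$, split the reaching probability over disjoint targets, and collapse each chain's contribution using the correspondence $Pr[\overline{R}^t_v(\{d_{y,j}\})]=Pr[\overline{R}^{t-j}_v(\{y\})]$. Your additional checks (acyclicity of $\overline{G}$, and that $t\le T$ makes the chain long enough so the reindexed sum starts at $i=0$) are details the paper leaves implicit, but they do not change the argument.
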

\begin{proof}
Let $v$ be a non-void node in $G$ and hence cannot be a dummy node
in $\overline{G}$. By lemma~\ref{lemma:additive}, 
the equation $\mathbb{E}[X^t_v(A, \widehat{A})] = \mathbb{E}[\overline{X}^t_v(A \cup \widehat{A} \cup D)]$ implies
\[\mathbb{E}[X^t_v(A, \widehat{A})] = \sum_{u\in A}Pr[\overline{R}^t_v(\{u\})]
+\sum_{y \in \widehat{A}} \left(Pr[\overline{R}^t_v(\{y\})] +
\sum_{w \in D_y} Pr[\overline{R}^t_v(\{w\})]\right).\]

Consider the \RW~process on $\overline{G}$ starting at $v$. 
For any node $y\in\widehat{A}$, and consider a node $w \in D_y$
that is $i$ hops away from $y$.  If $i > t$, then it is impossible
for $v$ to reach $w$ in $t$ steps.  Observe that if $v$ reaches $w$
at time $t$, then $v$ must reach $y$ at time $t-i$.  Hence,
$Pr[\overline{R}^t_v(\{w\})]=Pr[\overline{R}^{t-i}_v(\{y\})]$, and the summation
over $i$ from 1 to $t$ gives the required formula.
\qed \end{proof}
%in the chain attached to
%it, consider a node $w$ such that the path from $y$ to $w$ is of length $i$.
%Note that if $i \geq t$, any Random Walk starting from $v$ cannot touch $w$
%within $t$ steps. (See Fig~\ref{fig4}.)
%
%\begin{figure}
%    \centering
%    \includegraphics[width=0.6\textwidth]{Fig4.eps}
%    \caption{a path of length $i$}
%    \label{fig4}
%\end{figure}

%
%By the construction of the dummy nodes chain, we have
%$Pr[R^t(v,\{w\})]=Pr[R^{t-i}(v,\{y\})]$. Therefore,
%\begin{eqnarray*}
%Pr[X^t_v(A,\hat{A})]&=&\sum_{u\in A}Pr[R^t_v(\{u\})]
%+\sum_{y\in\widehat{A}}\sum^{t-1}_{i=0}Pr[R^{t-i}(v,\{y\})]\\
%&=&\sum_{u\in A}Pr[R^t_v(\{u\})]
%+\sum_{y\in\widehat{A}}\sum^{t}_{i=1}Pr[R^{i}(v,\{y\})].
%\end{eqnarray*}

\begin{definition}[Passing-Through Event] \label{defn:passing}
Let $G$ be an information network. For any node $v$, subset $C \subseteq V$ and $t \geq 0$,
we use $S^t_v(C)$ to denote the event that a \RW~process on $G$ starting from $v$ would
reach a node in $C$ at time $t$ \textbf{or before}.
\end{definition}

\begin{lemma}[General Connection between the \NLT~model and the \RW~model] \label{lemma:general_con}
Suppose $G$ is an acyclic information network,
and let $v$ be a non-void node, and $1 \leq t \leq T$.
On the same network $G$, consider the \NLT~model with 
transient initial set $A$ and permanent initial set $\widehat{A}$,
and the \RW~process starting at $v$.
Then, $\mathbb{E}[X^t_v(A, \widehat{A})] = Pr [R^t_v(A) \cup S^t_v(\widehat{A})]$.
%Observe this holds even when $A \cap \widehat{A} \neq \emptyset$.
%
%Consider the Random Walk Process on $G$ for $t$ steps starting
%at $v$.  For any subset $S$ of nodes,
%let $E(S)$
%
% consider
%a random walk of length $t$ \textbf{in $G$} starting at $v$. For any
%transient initial set $A$ and permanent initial set $\widehat{A}$,
%$\mathbb{E}[X^t_v(A, \widehat{A})]$ is the probability
%that the walk ends at $A$ or passes through $\widehat{A}$.
%
%In another words, for a random walk of length $t$ \textbf{in $G$} starting from $v$,
%define two events as,
%\begin{eqnarray*}
%E^1(S):&&\textrm{ The walk ends at $S$},\\
%E^2(S):&&\textrm{ The walk passes through $S$}.
%\end{eqnarray*}
%
%Then for any $A$ and $\widehat{A}$,
%$\mathbb{E}[X^t_v(A, \widehat{A})]=Pr[E^1(A)\cup E^2(\widehat{A})]$.
\end{lemma}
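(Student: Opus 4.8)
The plan is to start from Lemma~\ref{lemma:reaching}, which already expresses the quantity of interest as a sum of reaching probabilities for the \RW~process on the \emph{transformed} network $\overline{G}(\widehat{A})$, namely
\[\mathbb{E}[X^t_v(A,\widehat{A})] = \sum_{u\in A}Pr[\overline{R}^t_v(\{u\})] + \sum_{y\in\widehat{A}}\sum^t_{i=0}Pr[\overline{R}^i_v(\{y\})].\]
The whole task is then to re-interpret these two sums as probabilities of events for the \RW~process on the \emph{original} network $G$. The tool I would use throughout is a coupling between the walk on $\overline{G}$ and the walk on $G$: since the transformation only deletes the outgoing edges of the nodes in $\widehat{A}$ (redirecting each such $y$ into its dummy chain $D_y$) and leaves the transition probabilities out of every other node untouched, the two walks can be coupled to coincide up to the first time step at which they visit a node of $\widehat{A}$. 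After that first visit the walk on $\overline{G}$ is absorbed into a dummy chain and, because $\overline{G}$ is still acyclic (the chain construction adds no cycles), it can never again visit a non-dummy node.

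First I would handle the permanent sum. By the absorption just described, on $\overline{G}$ the walk visits a node of $\widehat{A}$ at most once, and that visit is necessarily the \emph{first} time the walk meets $\widehat{A}$. Hence the events $\overline{R}^i_v(\{y\})$, ranging over $y\in\widehat{A}$ and $0\le i\le t$, are pairwise disjoint, and their total probability is exactly the probability that the walk on $\overline{G}$ meets $\widehat{A}$ at some time at most $t$. By the coupling this equals the probability that the walk on $G$ passes through $\widehat{A}$ by time $t$, so
\[\sum_{y\in\widehat{A}}\sum^t_{i=0}Pr[\overline{R}^i_v(\{y\})] = Pr[S^t_v(\widehat{A})].\]

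Next I would handle the transient sum. For $u\in A$, and recalling the convention $A\cap\widehat{A}=\emptyset$ so that $u\notin\widehat{A}$, being at $u$ at exactly time $t$ on $\overline{G}$ forces the walk to have avoided every dummy chain through time $t$, i.e.\ to have avoided $\widehat{A}$ at all times $0,\dots,t-1$; since $u\notin\widehat{A}$, this is the same as avoiding $\widehat{A}$ through time $t$. By the coupling, and summing over the disjoint endpoints $u\in A$, the sum $\sum_{u\in A}Pr[\overline{R}^t_v(\{u\})]$ equals the probability that the walk on $G$ reaches $A$ at exactly time $t$ while never passing through $\widehat{A}$ up to time $t$, i.e.\ $Pr[R^t_v(A)\setminus S^t_v(\widehat{A})]$. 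Adding the two parts and noting that $R^t_v(A)\setminus S^t_v(\widehat{A})$ and $S^t_v(\widehat{A})$ are disjoint with union $R^t_v(A)\cup S^t_v(\widehat{A})$ yields the claimed identity.

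I expect the main obstacle to be making the coupling and the absorption statement fully rigorous: one must verify that the walk on $\overline{G}$ genuinely cannot re-enter a non-dummy node after its first visit to $\widehat{A}$ (this is exactly where acyclicity of $\overline{G}$ is used) and that the law of this first passage to $\widehat{A}$ on $\overline{G}$ agrees with the first-passage law on $G$. I would also check the degenerate case $v\in\widehat{A}$, where $S^t_v(\widehat{A})$ is certain and $\mathbb{E}[X^t_v(A,\widehat{A})]=1$; this should fall out of the same bookkeeping, with the $i=0$, $y=v$ term of the permanent sum carrying the entire probability.
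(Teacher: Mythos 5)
Your proposal is correct and follows essentially the same route as the paper: both start from Lemma~\ref{lemma:reaching} and then use a coupling of the \RW~process on $G$ with that on $\overline{G}(\widehat{A})$ (agreeing until the first visit to $\widehat{A}$, after which the walk on $\overline{G}$ is absorbed into a dummy chain) to identify the transient sum with $Pr[R^t_v(A)\setminus S^t_v(\widehat{A})]$ and the permanent sum with $Pr[S^t_v(\widehat{A})]$. Your write-up is in fact somewhat more careful than the paper's, spelling out the disjointness of the events and the degenerate case $v\in\widehat{A}$, which the paper leaves implicit.
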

\begin{proof}\label{ref:XE}
Without loss of generality,
we can still assume $A \cap \widehat{A} = \emptyset$,
because $X^t_v(A, \widehat{A}) = X^t_v(A \setminus \widehat{A}, \widehat{A})$ and $R^t_v(A) \cup S^t_v(\widehat{A}) = R^t_v(A \setminus \widehat{A}) \cup S^t_v(\widehat{A})$.
  From Lemma~\ref{lemma:reaching},
we have
\begin{eqnarray*} \label{eq:one}
\mathbb{E}[X^t_v(A,\widehat{A})] = \sum_{u\in A}Pr[\overline{R}^t_v(\{u\})]
+\sum_{y\in\widehat{A}}\sum^t_{i=0}Pr[\overline{R}^i_v(\{y\})],
\end{eqnarray*}

where the notation $\overline{R}$ means the corresponding
term referring to the reaching event in the
transformed graph $\overline{G} := \overline{G}[\widehat{A}]$.

We compare the random walks of $t$ steps starting at $v$ on $G$ and on $\overline{G}$ using
a coupling argument.  Starting at $v$, the random walk
on $\overline{G}$ copies the random choices made in $G$.  This
goes smoothly for the walk on $\overline{G}$ until
a node $y$ in $\widehat{A}$ is hit, at which point further
random choices made in $G$ are irrelevant.
From this coupling argument, we can relate the events from $G$ and $\overline{G}$ in the following way:
\begin{itemize}
\item For $u \in A$, $Pr[\overline{R}^t_v( \{u\})] = Pr[R^t_v( \{u\}) \setminus S^t_v( \widehat{A})]$.
\item For $y \in \widehat{A}$, $\sum^t_{i=0}Pr[\overline{R}^i_v(\{y\})]$ is the probability
that the walk in $G$ hits $y$ before any other node in $\widehat{A}$.
\end{itemize}
Hence, it follows that on the right hand side of (\ref{eq:one}),
the first term is $Pr[Pr[R^t_v( A) \setminus S^t_v( \widehat{A})]$
and the second term is $Pr[S^t_v( \widehat{A})]$. Therefore,
their sum is  $Pr [R^t_v(A) \cup S^t_v(\widehat{A})]$, as required.
%
%
%
%Observe that for $u\in A$
%the event $R^t_v(\{u\})$ of the random walk in $\overline{G}[\widehat{A}]$ is
%actually the event ``end at $u$ but do not pass through $\widehat{A}$''
%of the random walk in $G$. Then the event $\cup_{u\in A}R^t_v(\{u\})$ is
%the event ``end at $A$ but doe not pass through $\widehat{A}$'' in $G$.
%That is,
%\begin{eqnarray*}
%\sum_{u\in A}Pr[R^t_v(\{u\})&=&Pr[\cup_{u\in A}R^t_v(\{u\})]\\
%&=&Pr[E^1(A)\setminus E^2(\widehat{A})].
%\end{eqnarray*}
%
%For $y\in\widehat{A}$,
%the event $R^i(v,\{y\})$ of the random walk in $\overline{G}[\widehat{A}]$ is
%actually the event ``reach $u$ at time $i$ but
%do not pass through $\widehat{A}$ before time $i$''
%of the random walk in $G$. Then the event $\cup_{y\in\widehat{A}}R^i(v,\{u\})$ is
%the event ``the first touch to $\widehat{A}$ happens at time $i$'' in $G$.
%Therefore, we have,
%\begin{eqnarray*}
%\sum_{y\in\widehat{A}}\sum^{t}_{i=1}Pr[R^{i}(v,\{y\})]&=&
%\sum^{t}_{i=1}Pr[\cup_{y\in\widehat{A}}R^{i}(v,\{y\})]\\
%&=&Pr[E^2(\widehat{A})].
%\end{eqnarray*}
%
%Thus it is true that,
%\begin{eqnarray*}
%\mathbb{E}[X^t_v(A, \widehat{A})]&=&Pr[E^1(A)\cup E^2(\widehat{A})]+Pr[E^2(\widehat{A})]\\
%&=&Pr[E^1(A)\cup E^2(\widehat{A})].
%\end{eqnarray*}
\qed \end{proof}

\ignore{
\begin{remark}\label{rem:compute_general}
\textbf{(Computing $\mathbb{E}[X^t_v(A,\widehat{A})]$).}
Given an information network, we can use the modified transition matrix $\overline{M}$
such that each node in $\widehat{A}$ corresponds to
an absorbing state.
Then, $x^t_v(A,\widehat{A}) =
\sum_{u \in A \cup \widehat{A}} \overline{M}^t(v,u)$, 
and $\mathbb{E}[X^t_v(A,\widehat{A})] = Pr[x^t(A,\widehat{A})\geq \theta_v] = F(x^t_v(A,\widehat{A})),$
where $F(x)$ is the cumulative distribution function from which $\theta$'s are selected.
\end{remark}
}
\begin{theorem}\label{thm:sub}
{\bf (Submodularity and Monotonicity of $\mathbb{E}[X^t_v(A,\widehat{A})]$).}
Consider the \NLT~model on an acyclic information network $G$ with transient initial
set $A$ and permanent initial set $\widehat{A}$.
Then, the function $(A, \widehat{A}) \mapsto \mathbb{E}[X^t_v(A,\widehat{A})]$ is submodular and monotone.
\end{theorem}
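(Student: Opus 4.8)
The plan is to build directly on Lemma~\ref{lemma:general_con}, which already identifies $\mathbb{E}[X^t_v(A,\widehat{A})]$ with the random-walk probability $Pr[R^t_v(A)\cup S^t_v(\widehat{A})]$. Because both submodularity and monotonicity are preserved under nonnegative linear combinations, and hence under taking expectations, I would first reduce the claim to a statement about a single fixed trajectory of the \RW~process. Concretely, condition on a length-$t$ walk $\omega=(w_0,w_1,\ldots,w_t)$ started at $v$, and let $g_\omega(A,\widehat{A})$ be the indicator of the event $R^t_v(A)\cup S^t_v(\widehat{A})$ on that trajectory. Since the \RW~on a finite graph has only finitely many length-$t$ trajectories, $\mathbb{E}[X^t_v(A,\widehat{A})]=\mathbb{E}_\omega[g_\omega(A,\widehat{A})]$ is a finite nonnegative combination of the $g_\omega$, so it suffices to show that each $g_\omega$ is monotone and submodular in each argument.

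The key simplification is that on a fixed trajectory the two events reduce to simple set-intersection tests. Writing $P_\omega:=\{w_0,\ldots,w_t\}$ for the set of visited nodes, the reaching event $R^t_v(A)$ holds iff $w_t\in A$, while the passing-through event $S^t_v(\widehat{A})$ holds iff $P_\omega\cap\widehat{A}\neq\emptyset$. Thus $g_\omega(A,\widehat{A})=\max\bigl(\mathbf{1}[w_t\in A],\ \mathbf{1}[P_\omega\cap\widehat{A}\neq\emptyset]\bigr)$, where $\mathbf{1}[\cdot]$ denotes an indicator. Monotonicity in either argument is then immediate, since enlarging $A$ or $\widehat{A}$ can only make one of the two disjuncts easier to satisfy.

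For submodularity I would fix one argument and analyze the other, using the standard fact that a concave function of a modular (additive) set function is submodular. Holding $\widehat{A}$ fixed turns $\mathbf{1}[P_\omega\cap\widehat{A}\neq\emptyset]$ into a constant $c\in\{0,1\}$, so $g_\omega(\cdot,\widehat{A})=\max(\mathbf{1}[w_t\in A],c)$; when $c=1$ this is constant, and when $c=0$ it equals $|\{w_t\}\cap A|$, which is modular, hence submodular. Holding $A$ fixed turns $\mathbf{1}[w_t\in A]$ into a constant $c'\in\{0,1\}$, so $g_\omega(A,\cdot)=\max(\mathbf{1}[P_\omega\cap\widehat{A}\neq\emptyset],c')$; the case $c'=1$ is constant, and the case $c'=0$ gives $\min(1,|P_\omega\cap\widehat{A}|)$, the concave map $\min(1,\cdot)$ applied to the modular function $\widehat{A}\mapsto|P_\omega\cap\widehat{A}|$, and is therefore submodular.

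Assembling these facts, every $g_\omega$ is monotone and submodular in each argument, and averaging over $\omega$ transfers both properties to $(A,\widehat{A})\mapsto\mathbb{E}[X^t_v(A,\widehat{A})]$, in the two-argument sense of Definition~\ref{def:submodular}. I expect the main obstacle to be bookkeeping rather than conceptual: the genuinely hard step, relating the threshold dynamics to a random walk, is already discharged in Lemma~\ref{lemma:general_con}, so the remaining work is the clean case analysis on the constants $c,c'$ and the justification that expectation over trajectories respects submodularity on each coordinate separately.
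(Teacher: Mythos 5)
Your proof is correct, and it shares the paper's backbone: in both arguments Lemma~\ref{lemma:general_con} does the real work, leaving only the check that $(A,\widehat{A})\mapsto Pr[R^t_v(A)\cup S^t_v(\widehat{A})]$ is monotone and submodular. The finishes differ, however. The paper never conditions on trajectories: it writes the marginal gain of adding a node $w$ directly as the probability of an event difference --- $Pr[R^t_v(\{w\})\setminus S^t_v(\widehat{C})]$ for a transient addition, $Pr[S^t_v(\{w\})\setminus(R^t_v(C)\cup S^t_v(\widehat{C}))]$ for a permanent one --- and observes that these events only shrink as the sets grow. You instead condition on the realized length-$t$ walk, reduce to the deterministic indicators $g_\omega$, and average. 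This is valid, but it silently uses the fact that the law of the \RW~trajectory does not depend on $(A,\widehat{A})$ (true, since Model~\ref{model:R} is defined by $G$ alone); that fact deserves one explicit sentence, since it is exactly what makes the coefficients in your nonnegative combination constants rather than set-dependent weights. Two remarks on what each route buys. First, your representation can be tightened to the identity $g_\omega(A,\widehat{A})=\min\bigl(1,\,|A\cap\{w_t\}|+|\widehat{A}\cap P_\omega|\bigr)$, which exhibits each $g_\omega$ as a genuine coverage function on the disjoint union of the two ground sets; this yields, with no extra work, the stronger joint form of submodularity that the paper's displayed inequalities assert, in which the marginal gains are compared while $A\subseteq B$ and $\widehat{A}\subseteq\widehat{B}$ grow simultaneously. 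Second, as written your case analysis --- freezing one argument and treating the other's indicator as a constant $c$ or $c'$ --- proves only the per-argument property of Definition~\ref{def:submodular}. That matches the theorem as stated, so there is no gap; but note that the matroid maximization in Theorem~\ref{thm:2-approx-Avg} runs the greedy algorithm over pairs $(A,\widehat{A})$, i.e.\ over the joint ground set, where the cross-argument (joint) form is the one actually invoked, and per-argument submodularity does not imply it in general. So if you adopt your route, record the coverage identity above rather than the constant-$c$ case analysis.
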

\begin{proof}
For notational convenience,
we drop the superscript $t$ and the subscript $v$,
and write for instance $X(A, \widehat{A}) := X^t_v(A, \widehat{A})$.
For the reaching and the passing-through events associated
with the Random Walk Process in $G$, we write
$R(A) := R^t_v(A)$ and $S(A) := S^t_v(A)$

It is sufficient to prove that, for any $A\subseteq B\subseteq V$,
$\widehat{A}\subseteq \widehat{B}\subseteq V$, and node $w\not\in (B\cup\widehat{B})$,
the following inequalities hold:
\begin{eqnarray}
\mathbb{E}[X(A\cup\{w\},\widehat{A})] - \mathbb{E}[X(A,\widehat{A})]
\geq \mathbb{E}[X(B\cup\{w\},\widehat{B})] - \mathbb{E}[X(B,\widehat{B})];\label{eqn:sub1}\\
\mathbb{E}[X(A,\widehat{A}\cup\{w\})] - \mathbb{E}[X(A,\widehat{A})]
\geq \mathbb{E}[X(B,\widehat{B}\cup\{w\})] - \mathbb{E}[X(B,\widehat{B})].\label{eqn:sub2}
\end{eqnarray}
By Lemma~\ref{lemma:general_con}, for any subsets $C$ and  $\widehat{C}$ such that $w\not\in (C\cup\widehat{C})$,
$x(C\cup\{w\},\widehat{C})-x(C,\widehat{C}) =
Pr[ R(C\cup\{w\}) \cup S(\widehat{C})] - Pr[ R(C) \cup S(\widehat{C})] = Pr[R(\{w\}) \setminus S(\widehat{C})]$,
where the last equality follows from definitions of reaching
and passing-through events.  Hence,
inequality (\ref{eqn:sub1}) follows because
$\widehat{A} \subseteq \widehat{B}$ implies that
$R(\{w\}) \setminus S(\widehat{A}) \supseteq R(\{w\}) \setminus S(\widehat{B})$.

Similarly, $\mathbb{E}[X(C,\widehat{C} \cup\{w\})] - \mathbb{E}[X(C,\widehat{C})] = Pr[S(\{w\}) \setminus (R(C) \cup S(\widehat{C}))]$.  Hence, inequality (\ref{eqn:sub2}) follows
because $R(A) \cup S(\widehat{A}) \subseteq R(B) \cup S(\widehat{B})$.
%
%
%
%
%
%
%
%
%
%
%Based on Lemma~\ref{ref:XE},
%\begin{eqnarray*}
%&&\mathbb{E}[X^t_v(A\cup\{x\},\widehat{A})]-\mathbb{E}[X^t_v(A,\widehat{A})]\\
%&=&Pr[E^1(A\cup\{x\})\cup E^2(\widehat{A})]-Pr[E^1(A)\cup E^2(\widehat{A})]\\
%&=&Pr[E^1(\{x\})\setminus E^2(\widehat{A})],\\
%&&\mathbb{E}[X^t_v(B\cup\{x\},\widehat{B})]-\mathbb{E}[X^t_v(B,\widehat{B})]\\
%&=&Pr[E^1(B\cup\{x\})\cup E^2(\widehat{B})]-Pr[E^1(B)\cup E^2(\widehat{B})]\\
%&=&Pr[E^1(\{x\})\setminus E^2(\widehat{B})].
%\end{eqnarray*}
%$\widehat{A}\subseteq \widehat{B}$ implies $E^2(\widehat{A})\subseteq E^2(\widehat{B})$,
%thus we know (~\ref{eqn:sub1}) is correct.
%
%Then consider the remaining part of the conclusion.
%\begin{eqnarray*}
%&&\mathbb{E}[X^t_v(A,\widehat{A}\cup\{x\})]-\mathbb{E}[X^t_v(A,\widehat{A})]\\
%&=&Pr[E^1(A)\cup E^2(\widehat{A}\cup\{x\})]-Pr[E^1(A)\cup E^2(\widehat{A})]\\
%&=&Pr[E^2(\{x\})\setminus (E^1(A)\cup E^2(\widehat{A}))],\\
%&&\mathbb{E}[X^t_v(B,\widehat{B}\cup\{x\})]-\mathbb{E}[X^t_v(B,\widehat{B})]\\
%&=&Pr[E^1(B)\cup E^2(\widehat{B}\cup\{x\})]-Pr[E^1(B)\cup E^2(\widehat{B})]\\
%&=&Pr[E^2(\{x\})\setminus (E^1(B)\cup E^2(\widehat{B}))].
%\end{eqnarray*}
%$A\subseteq B$, $\widehat{A}\subseteq \widehat{B}$
%imply $E^1(A)\cup E^2(\widehat{A})\subseteq E^1(B)\cup E^2(\widehat{B})$,
%thus we know (~\ref{eqn:sub2}) is correct.
\qed \end{proof}

\begin{corollary}[Objective Function is Submodular and Monotone]
With the same hypothesis as in Theorem~\ref{thm:sub},
the function $(A, \widehat{A}) \mapsto \overline{\sigma}(A,\widehat{A})$ is submodular and monotone.
\end{corollary}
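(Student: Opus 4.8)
The plan is to deduce the corollary directly from Theorem~\ref{thm:sub} together with the additive identity
\[\overline{\sigma}(A, \widehat{A}) = \frac{1}{T}\sum_{t=1}^{T}\sum_{v\in V}\mathbb{E}[X^t_v(A,\widehat{A})]\]
recorded just after the definition of the indicator variables. The key observation is that $\overline{\sigma}$ is a nonnegative linear combination of the single-node, single-time-step functions $(A,\widehat{A})\mapsto\mathbb{E}[X^t_v(A,\widehat{A})]$, each of which Theorem~\ref{thm:sub} already certifies to be submodular and monotone. Since both submodularity and monotonicity (in the two-argument sense of Definition~\ref{def:submodular}) are preserved under addition and under scaling by the positive constant $1/T$, the conclusion will follow.

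Concretely, I would first recall that by Definition~\ref{def:submodular} it suffices to verify submodularity and monotonicity in each argument separately. Fixing the permanent set $\widehat{A}$, I would consider the map $A\mapsto\overline{\sigma}(A,\widehat{A})$. Writing it as the sum above, each term $A\mapsto\mathbb{E}[X^t_v(A,\widehat{A})]$ is submodular by Theorem~\ref{thm:sub}, so for $A\subseteq B$ and $w\notin B$ the defining marginal inequality holds term by term; summing these inequalities over $v\in V$ and $t\in[1,T]$ and multiplying by $1/T>0$ preserves the direction of the inequality, which gives submodularity of the sum. Monotonicity is handled the same way: for $A\subseteq B$ we have $\mathbb{E}[X^t_v(A,\widehat{A})]\le\mathbb{E}[X^t_v(B,\widehat{A})]$ for every $v$ and $t$, and summing preserves the inequality.

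Then I would repeat the identical argument with the roles swapped, fixing the transient set $A$ and treating $\widehat{A}\mapsto\overline{\sigma}(A,\widehat{A})$, again invoking the corresponding guarantee of Theorem~\ref{thm:sub} for the permanent argument. One minor bookkeeping point is that the outer sum ranges over all of $V$, which after the preliminary construction includes the void node $d$; but $d$ is never active, so $\mathbb{E}[X^t_d(A,\widehat{A})]=0$ is a constant function, which is trivially submodular and monotone and hence contributes nothing harmful to the combination. I do not expect any genuine obstacle here: essentially all of the content of the corollary lies in Theorem~\ref{thm:sub}, and the remaining step is just the standard closure of the class of submodular monotone functions under nonnegative linear combinations, carried out componentwise in the two arguments.
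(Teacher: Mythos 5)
Your proposal is correct and matches the paper's own reasoning: the paper establishes exactly this decomposition right after defining the indicator variables, noting that $\overline{\sigma}(A,\widehat{A}) = \frac{1}{T}\sum_{t=1}^{T}\sum_{v\in V}\mathbb{E}[X^t_v(A,\widehat{A})]$ and that submodularity and monotonicity of each term (supplied by Theorem~\ref{thm:sub}) carry over to the nonnegative linear combination. Your extra remark about the void node contributing a constant zero term is a harmless refinement the paper leaves implicit.
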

At the end, we achieve the main result of this paper.

\begin{theorem}\label{thm:2-approx-Avg}
%Consider the Non-progressive Threshold Model (Model~\ref{model:avg} introduced in Section~\ref{sec:pre}).
Given an acyclic information network, a time period $[1,T]$, a budget $K$ and advertising costs (transient or permanent) that are uniform over the nodes,
an advertiser can use the Standard Greedy Algorithm to compute a transient initial set $A$ and 
a permanent initial set $\widehat{A}$ with total cost at most $K$ in polynomial time
such that $\overline{\sigma}(A, \widehat{A})$ is at least $\frac{1}{2}$ of the optimal value.  
Moreover, there is a randomized algorithm that outputs $A$ and $\widehat{A}$ such that
the expected value (over the randomness of the randomized algorithm)
of $\overline{\sigma}(A, \widehat{A})$ is at least $1 - \frac{1}{e}$ of the optimal value, where $e$ is the natural number.
\end{theorem}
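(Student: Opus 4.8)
The plan is to reduce the statement to a single monotone submodular maximization over a matroid and then invoke the results of Fisher et al.~\cite{Fisher1978} and Calinescu et al.~\cite{DBLP:journals/siamcomp/CalinescuCPV11}. First I would repackage the two-argument objective as an ordinary set function. Let $V_0 := V \setminus \{d\}$ be the non-void nodes and form the ground set $N := V_0 \times \{T, P\}$, where $(v,T)$ means ``make $v$ transient'' and $(v,P)$ means ``make $v$ permanent''. To a subset $S \subseteq N$ associate $\widehat{A}_S := \{v : (v,P) \in S\}$ and $A_S := \{v : (v,T) \in S\} \setminus \widehat{A}_S$ (so a permanent choice dominates a redundant transient one), and define $g(S) := \overline{\sigma}(A_S, \widehat{A}_S)$. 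Using the Corollary together with the marginal inequalities (\ref{eqn:sub1}) and (\ref{eqn:sub2}) of Theorem~\ref{thm:sub}, I would verify that $g$ is monotone and submodular on $2^N$: inequality (\ref{eqn:sub1}) controls the marginal of adding a $T$-element and (\ref{eqn:sub2}) that of adding a $P$-element, and together these are exactly the decreasing-marginal condition on the product lattice $2^{V_0} \times 2^{V_0} \cong 2^N$. The only wrinkle is the domination rule, where adding $(v,T)$ contributes nothing once $(v,P)$ is present; monotonicity of $\overline{\sigma}$ makes such a marginal equal to zero rather than negative, so submodularity is preserved.

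Second, I would encode the budget as a matroid constraint. When $c = \widehat{c}$ this is immediate: the feasible sets are those with $|S| \le K/c$, a uniform matroid, and the disjointness requirement $A \cap \widehat{A} = \emptyset$ may be dropped because, by monotonicity and the domination rule, choosing both $(v,T)$ and $(v,P)$ is never strictly beneficial while the global optimum (which is disjoint) remains feasible with $g$-value equal to its $\overline{\sigma}$-value. Since a two-cost budget need not be a matroid in general, for $c \ne \widehat{c}$ I would guess the number $p$ of permanent nodes used by the optimum (only $O(n)$ choices); for each fixed $p$ the remaining budget allows $q := \lfloor (K - p\widehat{c})/c \rfloor$ transient nodes, and the constraint ``at most $p$ elements of type $P$ and at most $q$ of type $T$'' is a partition matroid $M_p$ on $N$.

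Third, for each guess I would run the Standard Greedy Algorithm of Fisher et al.~\cite{Fisher1978} (for the $\frac{1}{2}$ bound) and the continuous-greedy/pipage-rounding algorithm of Calinescu et al.~\cite{DBLP:journals/siamcomp/CalinescuCPV11} (for the $1 - \frac{1}{e}$ bound in expectation) to approximately maximize $g$ over $M_p$, and output the best result over all $p$. The ratio is correct because, taking $p = |\widehat{A}^*|$ for an optimal pair $(A^*, \widehat{A}^*)$, the set $S^* = \{(v,P) : v \in \widehat{A}^*\} \cup \{(v,T) : v \in A^*\}$ is independent in $M_p$ and satisfies $g(S^*) = \overline{\sigma}(A^*, \widehat{A}^*)$; hence the matroid optimum for this guess is at least the global optimum, and the two approximation guarantees carry over to the global optimum.

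Finally, polynomial running time requires that the objective be evaluated exactly and efficiently, since each greedy step queries $g$. For this I would rely on Lemma~\ref{lemma:general_con}: $\mathbb{E}[X^t_v(A, \widehat{A})] = Pr[R^t_v(A) \cup S^t_v(\widehat{A})]$ is a reaching/passing-through probability of the \RW~process, which can be read off from powers of the transition matrix after making every node of $\widehat{A}$ absorbing; summing over $v \in V$ and averaging over $t \in [1,T]$ then yields $\overline{\sigma}$ in polynomial time, with no sampling required. With an exact value oracle, $O(n)$ guesses, and a polynomial number of oracle calls per guess, the whole procedure runs in polynomial time. The main obstacle I anticipate is the second step: because a two-cost budget need not be a matroid, the reduction hinges on the enumeration-over-$p$ device and on verifying that discarding the disjointness constraint is harmless by monotonicity and domination; this, rather than the now-routine invocation of the submodular-maximization machinery, is where the argument must be made carefully.
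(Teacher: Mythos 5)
Your proposal is correct and takes essentially the same route as the paper: enumerate one of the two cardinalities (the paper guesses the number $k$ of transient nodes, you guess the number $p$ of permanent ones --- a symmetric choice), reduce the budget constraint to a partition matroid for each guess, and invoke Fisher et al.\ and Calinescu et al.\ on the monotone submodular objective furnished by Theorem~\ref{thm:sub}. The additional details you work out --- the explicit single-ground-set encoding $N = V_0 \times \{T,P\}$ with the domination rule for disjointness, and the exact polynomial-time evaluation of $\overline{\sigma}$ via the random-walk connection of Lemma~\ref{lemma:general_con} --- are points the paper leaves implicit rather than departures from its argument.
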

\begin{proof}
We describe how Theorem~\ref{thm:2-approx-Avg} is derived.
Recall that the advertiser is given a budget $K$, and the cost per transient node is $c$ and the cost per permanent node is $\widehat{c}$.
Observe that if the advertiser uses $k$ transient nodes, where
$k \leq \floor{\frac{K}{c}}$, then there can be at most
$\widehat{k} := \floor{\frac{K - k c}{\widehat{c}}}$
permanent nodes.  Hence, for each such guess of $k$ and the
corresponding $\widehat{k}$, the advertiser just needs to consider the maximization of the
submodular and monotone function $(A, \widehat{A}) \mapsto \overline{\sigma}(A, \widehat{A})$
on the matroid $\{(A, \widehat{A}) : |A| \leq k, |\widehat{A}| \leq
\widehat{k}\}$, for which $\frac{1}{2}$-approximation can be obtained
in polynomial time using the techniques of Fisher et al.\ \cite{Fisher1978}. 
A randomized algorithm given by Calinescu et al.\ \cite{DBLP:journals/siamcomp/CalinescuCPV11} achieves
$(1-\frac{1}{e})$-approximation in expectation.
\qed \end{proof}

\bibliographystyle{abbrv}
\bibliography{ref}

\end{document}